\newcommand{\kw}[1]{{\ensuremath {\mathsf{#1}}}\xspace}
\newcommand{\stitle}[1]{\vspace{0.75ex}\noindent{\bf #1}}
\newcommand{\stab}{\rule{0pt}{8pt}\\[-2ex]}
\newcommand{\eat}[1]{}
\newtheorem{theorem}{Theorem}
\newcommand{\ie}{\emph{i.e.,}\xspace}
\newcommand{\eg}{\emph{e.g.,}\xspace}
\newcommand{\wrt}{\emph{w.r.t.}\xspace}
\newcommand{\aka}{\emph{a.k.a.}\xspace}
\newcommand{\etc}{\emph{etc,}\xspace}
\newcommand{\rwm}{\kw{RWM}}
\newcommand{\ci}{\kw{Citeseer}}
\newcommand{\brainnet}{\kw{BrainNet}}
\newcommand{\mitdata}{\kw{RM}}
\newcommand{\sixng}{\kw{6-NG}}
\newcommand{\nineng}{\kw{9-NG}}
\newcommand{\dblp}{\kw{DBLP}}
\newcommand{\athle}{\kw{Athletics}}
\newcommand{\gene}{\kw{Genetic}}
\begin{document}

\title{Random Walk on Multiple Networks}

\author{Dongsheng~Luo, Yuchen~Bian, Yaowei~Yan, Xiong~Yu, Jun~Huan, Xiao~Liu, Xiang~Zhang%
\IEEEcompsocitemizethanks{\IEEEcompsocthanksitem \vspace{-1ex}
D. Luo is with the Florida International University, Miami, FL, 33199.\protect\\
E-mail: dluo@fiu.edu.  \vspace{-1.5ex}}

\IEEEcompsocitemizethanks{\IEEEcompsocthanksitem
Y. Bian is with Amazon Search Science and AI, USA.\protect\\
E-mail: yuchbian@amazon.com. \vspace{-1.5ex}}

\IEEEcompsocitemizethanks{\IEEEcompsocthanksitem
Y. Yan is with Meta Platforms, Inc., USA.\protect\\
E-mail: yanyaw@meta.com. \vspace{-1.5ex}}

\IEEEcompsocitemizethanks{\IEEEcompsocthanksitem \vspace{-1ex}
X. Liu, and X. Zhang are with the Pennsylvania State University, State College, PA 16802.\protect\\
E-mail: \{xxl213, xzz89\}@psu.edu.  \vspace{-1.5ex}}

\IEEEcompsocitemizethanks{\IEEEcompsocthanksitem
X. Yu is with Case Western Reserve University.\protect\\
E-mail: xxy21@case.edu \vspace{-1.5ex}}
\IEEEcompsocitemizethanks{\IEEEcompsocthanksitem
J. Huan is with AWS AI Labs; work done before joining AWS.\protect\\
E-mail: lukehuan@amazon.com \vspace{-1.5ex}}

\thanks{}}

\markboth{Preprint}%
{\MakeLowercase{\textit{Dongsheng Luo et al.}}: Random Walk on Multi-Networks}

\IEEEtitleabstractindextext{%

\begin{abstract}
Random Walk is a basic algorithm to explore the structure of networks, which can be used in many tasks, such as local community detection and network embedding. Existing random walk methods are based on single networks that contain limited information. In contrast, real data often contain entities with different types or/and from different sources, which are comprehensive and can be better modeled by multiple networks. To take the advantage of rich information in multiple networks and make better inferences on entities, in this study, we propose random walk on multiple networks, \rwm. 
\rwm is flexible and supports both multiplex networks and general multiple networks, which may form many-to-many node mappings between networks. \rwm  sends a random walker on each network to obtain the local proximity (i.e., node visiting probabilities) w.r.t. the starting nodes. Walkers with similar visiting probabilities reinforce each other. We theoretically analyze the convergence properties of \rwm. Two approximation methods with theoretical performance guarantees are proposed for efficient computation. We apply \rwm in link prediction, network embedding, and local community detection. Comprehensive experiments conducted on both synthetic and real-world datasets demonstrate the effectiveness and efficiency of \rwm.   
\end{abstract}

\begin{IEEEkeywords}
Random Walk, Complex Network, Local Community Detection, Link Prediction, Network Embedding
\end{IEEEkeywords}

}

\maketitle

\IEEEdisplaynontitleabstractindextext
\IEEEpeerreviewmaketitle

\section{Introduction}
Networks (graphs) are natural representations of relational data in the real world, with vertices modeling entities and edges describing relationships among entities. With the rapid growth of information, a large volume of network data is generated, such as social networks~\cite{leskovec2012learning}, author collaboration networks~\cite{newman2004coauthorship}, document citation networks~\cite{shuai2018query}, and biological networks~\cite{ni2018comclus}.
In many emerging applications, different types of vertices and relationships are obtained from different sources, which can be better modeled by multiple networks.

There are two main kinds of multiple networks. Fig.~\ref{fig:social_example} shows an example of multiple networks with the same node set and different types of edges (which are called \textit{multiplex networks}~\cite{MultilayerNetworks, MultiplexNet} or \textit{multi-layer networks}). Each node represents an employee of a university~\cite{kim2015community}. These three networks reflect different relationships between employees: co-workers, lunch-together, and Facebook-friends. Notice that more similar connections exist between two offline networks (i.e., co-worker and lunch-together) than that between offline and online relationships (i.e., Facebook-friends).

Fig.~\ref{fig:apnet} is another example from the DBLP dataset with multiple domains of nodes and edges (which is called \textit{multi-domain networks}~\cite{ni2018co,luo2020deep}). The left part is the author-collaboration network and the right part is the paper-citation network. A cross-edge connecting an author and a paper indicates that the author writes the paper. We see that authors and papers in the same research field may have dense cross-links. Multiplex networks are a special case of multi-domain networks with the same set of nodes and the cross-network relations are one-to-one mappings.

Given a starting node in a graph, we randomly move to one of its neighbors; then we move to a neighbor of the current node at random \etc. This process is called random walk on the graph~\cite{lovasz1993random}.
Random walk is an effective and widely used method to explore the local structures in networks, which serves as the basis of many advanced tasks such as network embedding~\cite{cao2015grarep, grover2016node2vec, perozzi2014deepwalk, tang2015line}, link prediction~\cite{martinez2017survey}, node/graph classification~\cite{klicpera2019diffusion,klicpera2018predict}, and graph generation~\cite{bojchevski2018netgan}. Various extensions, such as the random walk with restart (RWR)~\cite{tong2006fast}, second-order random walk~\cite{wu2017remember}, and multi-walker chain~\cite{bian2017many} have been proposed to analyze the network structure. Despite the encouraging progress,  most existing random walk methods focus on a single network merely. Some methods merge the multiple networks as a single network by summation and then apply traditional random walk\cite{de2015identifying}. However, these methods neglect types of edges and nodes and assume that all layers have the same effect, which may be too restrictive in most scenarios.

\begin{figure}[t!]
    \centering
    \subfigure[Co-worker]{\includegraphics[width=1.1in,trim=0cm 4cm 0cm 4cm, clip]{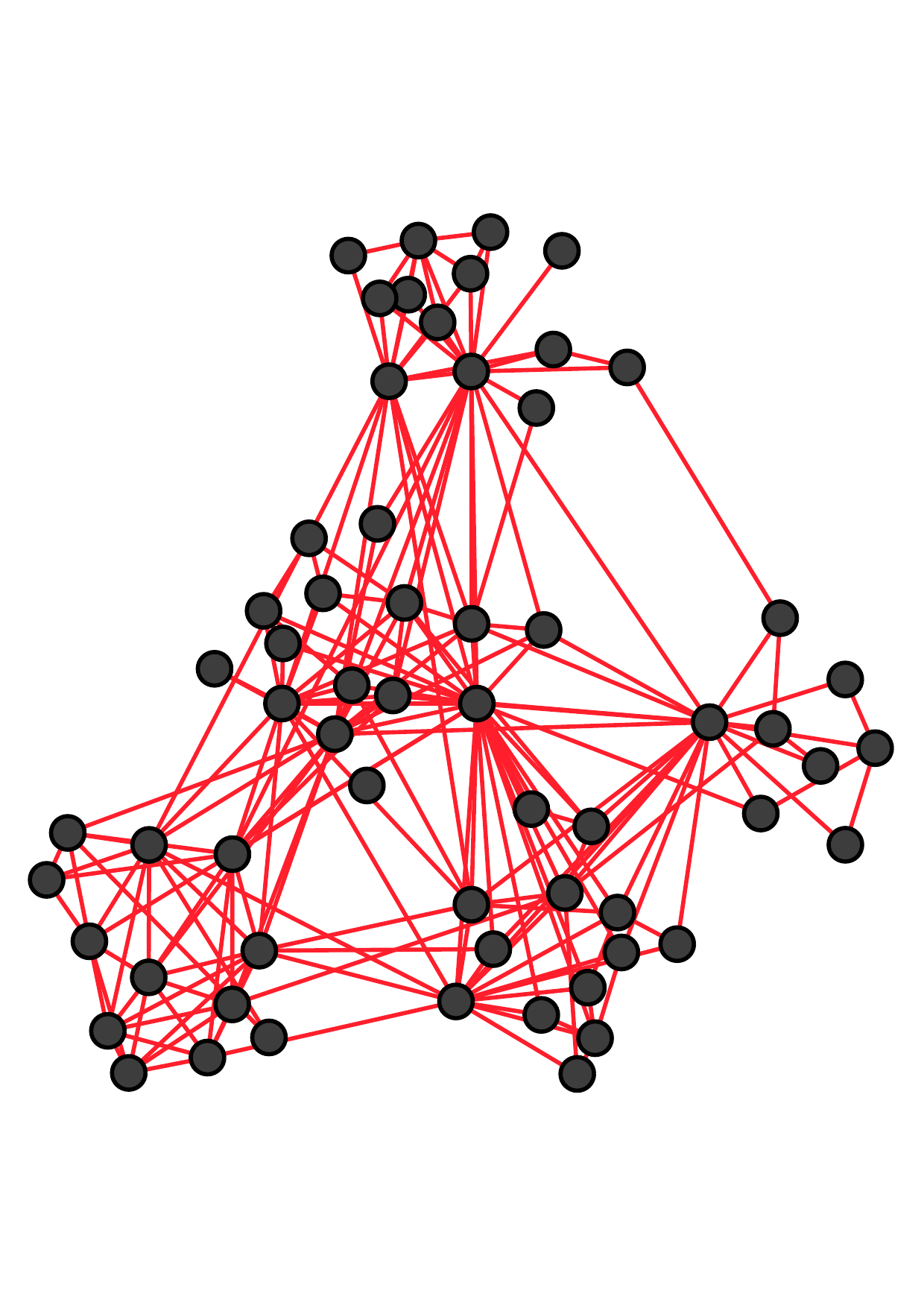}\label{fig:social_work}}\hspace{0.1cm}
    \subfigure[Lunch-together]{\includegraphics[width=1.1in,trim=0cm 4cm 0cm 4cm,clip]{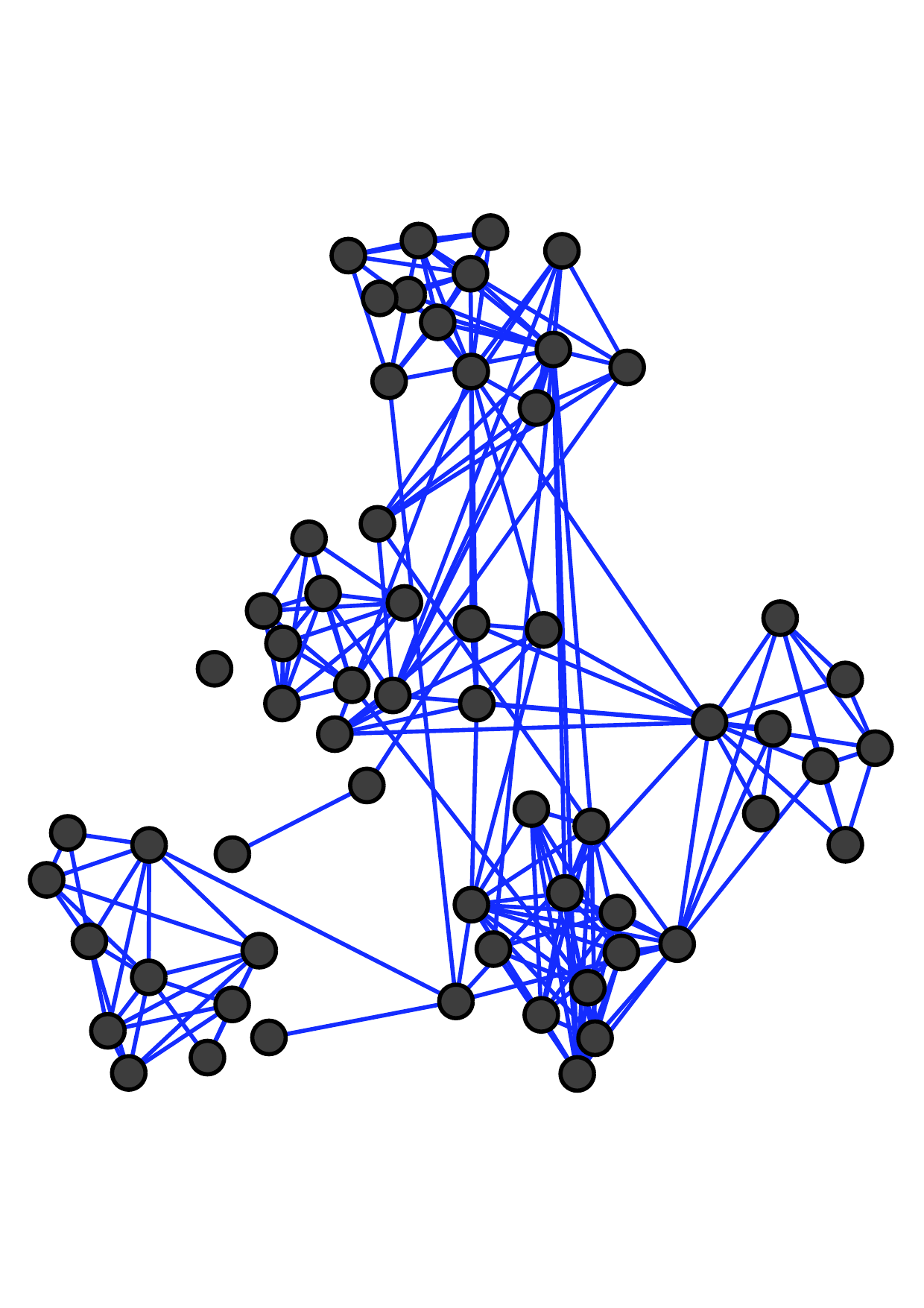}\label{fig:social_lunch}}\hspace{0.1cm}
    \subfigure[Facebook-friends]{\includegraphics[width=1.1in,trim=0cm 4cm 0cm 4cm, clip]{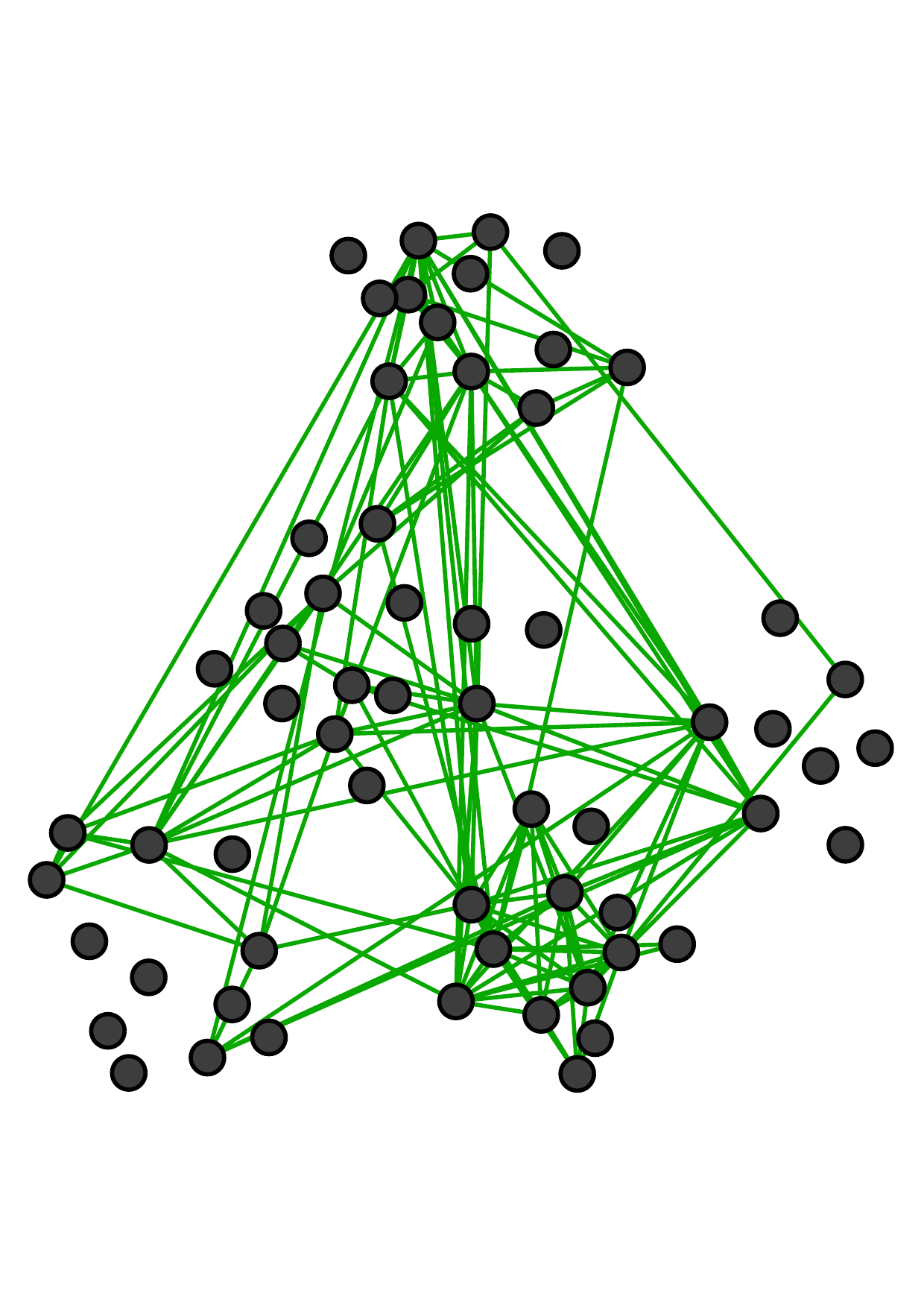}\label{fig:social_facebook}}
    \caption{An example of multiple social networks with the same node set. }
    \label{fig:social_example}
\end{figure}

\begin{figure}
    \centering
    \includegraphics[width=2.3in]{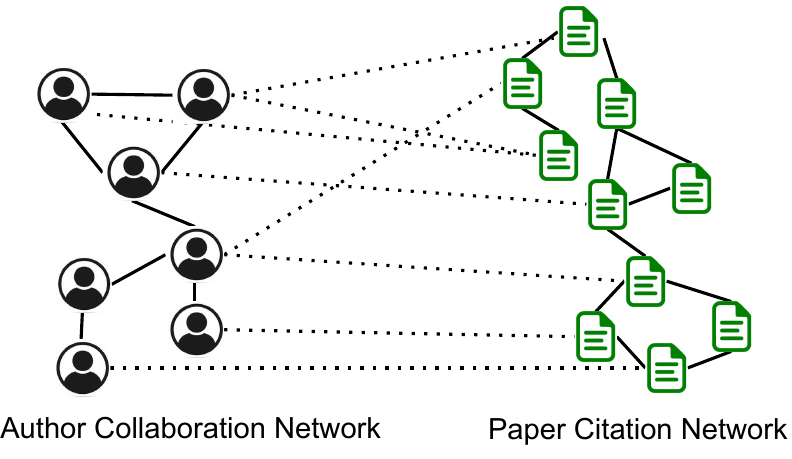}
    \caption{An example of general multiple networks. Many-to-many mappings exist between networks.}
    \label{fig:apnet}
\end{figure}

To this end, in this study, we propose a novel random walk method, \rwm (\underline{\textit{R}}andom \underline{\textit{W}}alk on \underline{\textit{M}}ultiple networks), to explore local structures in multiple networks.  The key idea is to integrate  complementary influences from multiple networks. 
We send out a random walker on each network to explore the network topology based on corresponding transition probabilities. 
For networks containing the starting node, the probability vectors of walkers are initiated with the starting node.  For other networks, probability vectors are initialized by transiting visiting probability via cross-connections among networks. Then probability vectors of the walkers are updated step by step. The vectors represent walkers' visiting histories. Intuitively, if two walkers share similar or relevant visiting histories (measured by cosine similarity and transited with cross-connections), it indicates that the visited local typologies in corresponding networks are relevant. And these two walkers should reinforce each other with highly weighted influences. On the other hand, if visited local typologies of two walkers are less relevant, smaller weights should be assigned. We update each walker's visiting probability vector by aggregating other walkers' influences at each step. In this way, the transition probabilities of each network are modified dynamically. Compared to traditional random walk models where the transition matrix is time-independent~\cite{andersen2006local,li2015uncovering,kloster2014heat,wu2015robust,wu2016remember}, \rwm can restrict most visiting probability in the local subgraph \wrt the starting node, in each network and ignore irrelevant or noisy parts. Theoretically, we provide rigorous analyses of the convergence properties of the proposed model. Two speeding-up strategies are developed as well. We conduct extensive experiments on real and synthetic datasets to demonstrate the advantages of our \rwm model on effectiveness and efficiency for local community detection, link prediction, and network embedding in multiple networks.

Our contributions are summarized as follows.
\begin{itemize}
    \item We propose a novel random walk model, \rwm, on multiple networks, which serves as a basic tool to analyze multiple networks. 
    \item Fast estimation strategies and sound theoretical foundations are provided to guarantee the effectiveness and efficiency of \rwm.
    \item Results of comprehensive experiments on synthetic and real multiple networks verify the advances of \rwm.
\end{itemize}

\section{Related Work}
\label{section:relatedwork}

\stitle{Random Walk.} In the single network random walk, a random walker explores the network according to the node-to-node transition probabilities, which are determined by the network topology. There is a stationary probability for visiting each node if the network is irreducible and aperiodic~\cite{langville2011google}. Random walk is the basic tool to measure the proximity of nodes in the network. Based on the random walk, various proximity measurement algorithms have been developed, among which PageRank~\cite{page1999pagerank}, RWR~\cite{tong2006fast}, SimRank~\cite{jeh2002simrank}, and MWC~\cite{bian2017many} have gained significant popularity.
In PageRank, instead of merely following the transition probability, the walker has a constant probability to jump to any node in the network at each time point. Random walk with restart, \aka personalized PageRank, is the query biased version of PageRank. At each time point, the walker has a constant probability to jump back to the query (starting) node. SimRank aims to measure the similarity between two nodes based on the principle that two nodes are similar if their neighbors are similar.  With one walker starting from each node,  the SimRank value between two nodes represents the expected number of steps required before two walkers meet at the same node if they walk in lock-step.  MWC sends out a group of walkers to explore the network. Each walker is pulled back by other walkers when deciding the next steps, which helps the walkers to stay as a group within the cluster. 

Very limited work has been done on the multiple network random walk. In~\cite{mucha2010community}, the authors proposed a classical random walk with uniform categorical coupling for multiplex networks. They assign a homogeneous weight $\omega$ for all cross-network edges connecting the same nodes in different networks. In~\cite{de2015identifying}, the authors propose a random walk method, where for each step the walker has $(1-r)$ probability to stay in the same network and $r$ probability to follow cross-network edges. However, both of them are designed for multiplex networks and cannot be applied to general multiple networks directly. Further, both $\omega$ and $r$ are set manually and stay the same for all nodes, which means these two methods only consider the global relationship between networks, which limits their applications in general and complex networks. In ~\cite{yao2018local}, the authors propose a random walk method on multi-domain networks by including a transition matrix between different networks. However, the transition matrix between networks is set manually and is independent of the starting node(s).

\stitle{Multiple Network Analysis.}  Recently, multiple networks have drawn increasing attention in the literature due to their capability in describing graph-structure data from different domains~\cite{li2018mane, ni2018co,ou2017multi,masha2018multi,zhang2020multiplex,yu2022multiplex,zhang2022role,xia2021graph}. Under the multiple network setting, a wide range of graph mining tasks have been extended to support more realistic real-life applications, including node representation learning~\cite{ni2018co, li2018mane, masha2018multi,zhang2022role}, node clustering~\cite{cheng2013flexible, liu2015robust, ou2017multi,luo2020deep}, and link prediction~\cite{zhang2020multiplex,yu2022multiplex}. For example, Multiplex Graph Neural Network is proposed to tackle the multi-behavior recommendation problem~\cite{zhang2020multiplex}.  In~\cite{yu2022multiplex}, the authors consider the relational heterogeneity within multiplex networks and propose a multiplex heterogeneous graph convolutional network (MHGCN) to learn node representations in heterogeneous networks. Compared to previous ones, MHGCN can adaptively extract useful meta-path interactions. Instead of focusing on a specific graph mining task, in this paper, we focus on designing a novel random walk model that can be used in various tasks as a fundamental component.

\section{Random Walk on Multiple Networks}
\label{section:model}
In this section, we first introduce notations. Then the reinforced updating mechanism among random walkers in \rwm is proposed.

\begin{table}
\centering
\begin{small}
\caption{Main notations}
\label{tab:symbols}
\setlength\tabcolsep{2.8pt}
\begin{tabular}{c|l}
\hline
\multicolumn{1}{c|}{\textbf{Notation}} &
  \multicolumn{1}{c}{\textbf{Definition}}\\
\hline
$K$  & The number of networks\\
$G_{i}$&The $i$\textsuperscript{th} network.\\ %  with nodes set $V_i$ and edge set $E_i$\\
$V_i$& The node set of $G_{i}$\\
$E_i$& The edge set of $G_{i}$\\
$\mathbf{P}_{i}$&Column-normalized transition matrix of $G_{i}$.\\
$E_{i-j}$   &The cross-edge set between network $G_i$ and $G_j$\\

$\mathbf{S}_{i\rightarrow j}$ & Column-norm. cross-trans. mat. from $V_i$ to $V_j$ \\
$u_q$&A given starting node $u_q$ from $V_q$\\
$\mathbf{e}_q$&One-hot vector with only one value-1 entry for $u_q$\\
$\mathbf{x}_i^{(t)}$ & Node visit. prob. vec. of the $i$\textsuperscript{th} walker in $G_i$ at $t$\\
$\mathbf{W}^{(t)}$&Relevance weight matrix at time $t$\\
% \hline
$\mathcal{P}_i^{(t)}$ & Modified trans. mat. for the $i$\textsuperscript{th} walker in $G_i$ at $t$\\
$\alpha, \lambda,\theta$&restart factor, decay factor, covering factor.\\
\bottomrule
\end{tabular}
\end{small}
\end{table}

\subsection{Notations}\label{section:notation}
Suppose that there are $K$ undirected networks, the $i$\textsuperscript{th} ($1\leq i\leq K$) network is represented by $G_i=(V_i,E_i)$ with node set $V_i$ and edge set $E_i$. We denote its transition matrix as a column-normalized matrix $\mathbf{P}_i\in \mathbb{R}^{|V_i|\times |V_i|}$. The $(v,u)$\textsuperscript{th} entry $\mathbf{P}_i(v,u)$ represents the transition probability from node $u$ to node $v$ in $V_i$. Then the $u$\textsuperscript{th} column $\mathbf{P}_i(:,u)$ is the transition distribution from node $u$ to all nodes in $V_i$. Next, we denote ${E}_{i- j}$ the cross-connections between nodes in two networks $V_i$ and $V_j$. The corresponding cross-transition matrix is $\mathbf{S}_{i\rightarrow j}\in \mathbb{R}^{|V_j|\times |V_i|}$. Then the $u$\textsuperscript{th} column $\mathbf{S}_{i\rightarrow j}(:,u)$ is the transition distribution from node $u\in V_i$ to all nodes in $V_j$. Note that $\mathbf{S}_{i\rightarrow j}\neq \mathbf{S}_{j\rightarrow i}$. And for the multiplex networks with the same node set (e.g., Fig. \ref{fig:social_example}), $\mathbf{S}_{i\rightarrow j}$ is just an identity matrix $\mathbf{I}$ for arbitrary $i,j$.

Suppose we send a random walker on $G_i$, we let $\mathbf{x}_i^{(t)}$ be the node visiting probability vector in $G_i$ at time $t$. Then the updated vector $\mathbf{x}_i^{(t+1)}=\mathbf{P}_i\mathbf{x}_i^{(t)}$ means the probability transiting among nodes in $G_i$. And $\mathbf{S}_{i\rightarrow j}\mathbf{x}_i^{(t)}$ is the probability vector propagated into $G_j$ from $G_i$. Important notations are summarized in Table~\ref{tab:symbols}.

\subsection{Reinforced Updating Mechanism}
\label{section:updating}
In \rwm, we send out one random walker for each network. 
Initially, for the starting node $u_q\in G_q$, the corresponding walker's $\mathbf{x}_q^{(0)}=\mathbf{e}_q$ where $\mathbf{e}_q$ is a one-hot vector with only one value-1 entry corresponding to $u_q$. For other networks $G_i(i \neq q)$,  we initialize $\mathbf{x}_i^{(0)}$ by 
$\mathbf{x}_i^{(0)} = \mathbf{S}_{q\rightarrow i}\mathbf{x}_{q}^{(0)}$. That is\footnote{If there are no direct connections from $u_q$ to nodes in $V_i(i\neq q)$, we first propagate probability to other nodes in $G_q$ from $u_q$ via breadth-first-search layer by layer until we reach a node which has cross-edges to nodes in $V_i$. Then we initialize $\mathbf{x}_i^{(0)}=\mathbf{S}_{q\rightarrow i}\mathbf{P}_q^t\mathbf{e}_q$, where $t$ is the number of hops that we first reach the effective node from $u_q$ in $G_q$.}:
\begin{equation}
\begin{small}
 \label{eq:ini_x0}
        \mathbf{x}^{(0)}_i = \begin{cases}
                        \mathbf{e}_q &\text{if $i = q$} \\
                        \mathbf{S}_{q\rightarrow i}\mathbf{e}_q &\text{otherwise}
                    \end{cases}
\end{small}
\end{equation}

To update $i$\textsuperscript{th} walker's vector $\mathbf{x}_i^{(t)}$, the walker not only follows the transition probabilities in the corresponding network $G_i$, but also obtains influences from other networks. Intuitively, networks sharing relevant visited local structures should influence each other with higher weights. And the ones with less relevant visited local structures have fewer effects. We measure the relevance of visited local structures of two walkers in $G_i$ and $G_j$ with the cosine similarity of their vectors $\mathbf{x}_i^{(t)}$ and  $\mathbf{x}_j^{(t)}$. Since different networks consist of different node sets, we define the relevance as $\textrm{cos} (\mathbf{x}_{i}^{(t)},\mathbf{S}_{j\rightarrow i}\mathbf{x}_{j}^{(t)})$. Notice that when $t$ increases, walkers will explore nodes further away from the starting node. Thus, we add a decay factor $\lambda$ $(0<\lambda<1)$ in the relevance to emphasize the similarity between local structures of two different networks within a shorter visiting range. In addition, $\lambda$ can guarantee and control the convergence of the \rwm model. Formally, we let $\mathbf{W}^{(t)}\in \mathbb{R}^{K\times K}$ be the local relevance matrix among networks at time $t$ and we initialize it with identity matrix $\mathbf{I}$. We update each entry $\mathbf{W}^{(t)}(i,j)$ as follow: 
\begin{equation}
\begin{small}
 \label{eq:ws}
     \mathbf{W}^{(t)}(i,j)=\mathbf{W}^{(t-1)}(i,j)+\lambda^{t} \cos (\mathbf{x}_{i}^{(t)},\mathbf{S}_{j\rightarrow i}\mathbf{x}_{j}^{(t)})
\end{small}
\end{equation}

For the $i$\textsuperscript{th} walker, influences from other networks are reflected in the dynamic modification of the original transition matrix $\mathbf{P}_i$ based on the relevance weights. Specifically, the modified transition matrix of $G_i$ is:
\begin{equation}
\begin{small}
 \label{eq:general_p}
      \mathcal{P}_i^{(t)} = \sum_{j=1}^K \hat{\mathbf{W}}^{(t)}(i,j) \mathbf{S}_{j\rightarrow i} \mathbf{P}_j  \mathbf{S}_{i\rightarrow j}
\end{small}
 \end{equation}
where $\mathbf{S}_{j\rightarrow i} \mathbf{P}_j  \mathbf{S}_{i\rightarrow j}$ represents the propagation flow pattern $G_i\rightarrow G_j\rightarrow G_i$ (counting from the right side) and $\hat{\mathbf{W}}^{(t)}(i,j) = \frac{\mathbf{W}^{(t)}(i,j)}{\sum_k\mathbf{W}^{(t)}(i,k)}$ is the row-normalized local relevance weights from $G_j$ to $G_i$. To guarantee the stochastic property of the transition matrix, we also column-normalize $\mathcal{P}_i^{(t)}$ after each update.

At time $t+1$, the visiting probability vector of the walker on $G_i$ is updated:
\begin{equation}
\begin{small}
 \label{eq:update}
      \mathbf{x}_i^{(t+1)}=\mathcal{P}_i^{(t)} \mathbf{x}_i^{(t)}
\end{small}
\end{equation}

Different from the classic random walk model, transition matrices in \rwm dynamically evolve with local relevance influences among walkers from multiple networks. As a result, the time-dependent property enhances \rwm with the power of aggregating relevant and useful local structures among networks.%

Next, we theoretically analyze the convergence properties. First, in Theorem \ref{th:converge_general}, we present the weak convergence property~\cite{bian2017many} of the modified transition matrix $\mathcal{P}_i^{(t)}$. The convergence of the visiting probability vector will be provided in Theorem \ref{th:converge_score}.

 \begin{theorem}
 \label{th:converge_general}
 When applying \rwm on multiple networks, for any small tolerance $0<\epsilon <1$, for all $i$, when $t > \lceil  \log_\lambda\frac{\epsilon}{K^2(|V_i|+2)} \rceil$,  $\parallel {\mathcal{P}}_i^{(t+1)}-{\mathcal{P}}_i^{(t)}\parallel_\infty< \epsilon$, where $V_i$ is the node set of network $G_i$ and $\parallel \cdot \parallel_\infty$ is the $\infty$-norm of a matrix. 
 \end{theorem}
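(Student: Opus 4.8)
The plan is to exploit that the only part of $\mathcal{P}_i^{(t)}$ in Eq.~\eqref{eq:general_p} that varies with $t$ is the row-normalized weight vector $\hat{\mathbf{W}}^{(t)}(i,\cdot)$; the $K$ matrices $M_j:=\mathbf{S}_{j\rightarrow i}\mathbf{P}_j\mathbf{S}_{i\rightarrow j}$ are fixed. Each $M_j$ is a product of column-normalized (hence column-stochastic) matrices, so it is column-stochastic; and since the entries of $\hat{\mathbf{W}}^{(t)}(i,\cdot)$ are non-negative and sum to $1$, the matrix $\mathcal{P}_i^{(t)}=\sum_j\hat{\mathbf{W}}^{(t)}(i,j)M_j$ is a convex combination of column-stochastic matrices and is therefore itself column-stochastic --- so the explicit column-normalization mentioned after Eq.~\eqref{eq:general_p} does nothing, and we may subtract the closed forms directly: $\mathcal{P}_i^{(t+1)}-\mathcal{P}_i^{(t)}=\sum_{j=1}^{K}\bigl(\hat{\mathbf{W}}^{(t+1)}(i,j)-\hat{\mathbf{W}}^{(t)}(i,j)\bigr)M_j$. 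Taking the $\infty$-norm and using $\parallel M_j\parallel_\infty\le|V_i|$ (a column-stochastic $|V_i|\times|V_i|$ matrix has all entries summing to $|V_i|$, hence maximal row sum at most $|V_i|$) reduces everything to showing that $\sum_{j}\bigl|\hat{\mathbf{W}}^{(t+1)}(i,j)-\hat{\mathbf{W}}^{(t)}(i,j)\bigr|$ decays geometrically in $t$ with the right constant.

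The second step is to bound this perturbation of the normalized weights. Because $\mathbf{x}_i^{(t)}$ and $\mathbf{S}_{j\rightarrow i}\mathbf{x}_j^{(t)}$ are entrywise non-negative, the cosine in Eq.~\eqref{eq:ws} lies in $[0,1]$, so the raw increment $\delta_j:=\mathbf{W}^{(t+1)}(i,j)-\mathbf{W}^{(t)}(i,j)=\lambda^{t+1}\cos(\cdot,\cdot)$ satisfies $0\le\delta_j\le\lambda^{t+1}$; moreover $\mathbf{W}^{(t)}(i,\cdot)$ is entrywise non-decreasing and, since $\mathbf{W}^{(0)}=\mathbf{I}$, its diagonal entry never drops below $1$, so $A^{(t)}:=\sum_k\mathbf{W}^{(t)}(i,k)\ge1$ for all $t$. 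Writing the difference of the two normalized rows as $\frac{\mathbf{W}^{(t)}(i,j)+\delta_j}{A^{(t)}+\sum_k\delta_k}-\frac{\mathbf{W}^{(t)}(i,j)}{A^{(t)}}$ and estimating it above and below separately, using only $\mathbf{W}^{(t)}(i,j)\le A^{(t)}$ and $A^{(t)}\ge1$, gives $-\sum_k\delta_k\le\hat{\mathbf{W}}^{(t+1)}(i,j)-\hat{\mathbf{W}}^{(t)}(i,j)\le\delta_j$, and hence $\bigl|\hat{\mathbf{W}}^{(t+1)}(i,j)-\hat{\mathbf{W}}^{(t)}(i,j)\bigr|\le K\lambda^{t+1}$ for every $j$. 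I expect this to be the delicate point: the symmetric estimate one gets by bounding the two fractions of that quotient difference independently is only $(K+1)\lambda^{t+1}$, which after multiplication by $K$ and $|V_i|$ is too loose to reach the stated threshold once $|V_i|>2K$; one has to use that the positive part $\delta_j$ and the negative part $\sum_k\delta_k$ of a re-normalization perturbation cannot both dominate.

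Finally, I would assemble the pieces. Summing over the $K$ coordinates, $\sum_j\bigl|\hat{\mathbf{W}}^{(t+1)}(i,j)-\hat{\mathbf{W}}^{(t)}(i,j)\bigr|\le K^2\lambda^{t+1}$, so $\parallel\mathcal{P}_i^{(t+1)}-\mathcal{P}_i^{(t)}\parallel_\infty\le K^2|V_i|\lambda^{t+1}<K^2|V_i|\lambda^{t}$. Since $0<\lambda<1$, the function $\log_\lambda$ is decreasing, so $t>\lceil\log_\lambda\frac{\epsilon}{K^2(|V_i|+2)}\rceil$ forces $\lambda^{t}<\frac{\epsilon}{K^2(|V_i|+2)}$, whence $\parallel\mathcal{P}_i^{(t+1)}-\mathcal{P}_i^{(t)}\parallel_\infty<\frac{|V_i|}{|V_i|+2}\,\epsilon<\epsilon$, which is exactly why the threshold can afford the extra ``$+2$'' in its denominator. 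The edge case $K=1$ is trivial, since then $\hat{\mathbf{W}}^{(t)}\equiv1$ and $\mathcal{P}_1^{(t)}$ is constant in $t$; and the whole argument is uniform in $i$, so it holds for all $i$.
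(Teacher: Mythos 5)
There is a genuine gap, and it sits exactly at the step you flagged as a convenience: the claim that the explicit column-normalization of $\mathcal{P}_i^{(t)}$ ``does nothing.'' That claim holds only if every $M_j=\mathbf{S}_{j\rightarrow i}\mathbf{P}_j\mathbf{S}_{i\rightarrow j}$ is genuinely column-stochastic, which requires every node of $V_i$ to have at least one cross-edge into every $V_j$. In the general multi-domain setting that Theorem~\ref{th:converge_general} targets, cross-mappings are many-to-many and partial: a node $u\in V_i$ with no cross-edge into $V_j$ gives a zero column in $\mathbf{S}_{i\rightarrow j}$, hence a zero column in $M_j$, and the column sums of $\sum_j\hat{\mathbf{W}}^{(t)}(i,j)M_j$ are then strictly less than $1$ and, crucially, vary with $t$ (the paper's own proof only establishes that they lie in $[\tfrac{1}{K},1]$). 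Once the normalization is active, $\hat{\mathcal{P}}_i^{(t+1)}-\hat{\mathcal{P}}_i^{(t)}$ is no longer the linear combination $\sum_j\bigl(\hat{\mathbf{W}}^{(t+1)}(i,j)-\hat{\mathbf{W}}^{(t)}(i,j)\bigr)M_j$, and your entire final assembly collapses. What you have actually proved is the multiplex case --- Theorem~\ref{th:converge_p}, where all $\mathbf{S}$ are identities --- with the constant $K^2|V_i|$ in place of the paper's $K$; the fact that this lands under the threshold $\epsilon/(K^2(|V_i|+2))$ is not evidence that the general case is covered.

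The missing piece is precisely where the $(|V_i|+2)$ comes from. The paper's argument bounds, entrywise, both the numerator perturbation $|\mathcal{P}_i^{(t+1)}(x,y)-\mathcal{P}_i^{(t)}(x,y)|\le\lambda^tK$ and the column-sum perturbation $|\sum_z\mathcal{P}_i^{(t+1)}(z,y)-\sum_z\mathcal{P}_i^{(t)}(z,y)|\le\lambda^tK|V_i|$, and then controls the difference of the two ratios using the lower bound $\tfrac{1}{K}$ on the column sums, picking up the extra factors and the side condition $\lambda^tK^2|V_i|\le1$ that make the constant $K^2(|V_i|+2)$. Your perturbation lemma for the row-normalized weights $\hat{\mathbf{W}}^{(t)}(i,\cdot)$ (the two-sided estimate $-\sum_k\delta_k\le\hat{\mathbf{W}}^{(t+1)}(i,j)-\hat{\mathbf{W}}^{(t)}(i,j)\le\delta_j$) is correct and is essentially the paper's first-stage estimate; a very similar quotient-difference argument is what you still owe for the column-normalization of $\mathcal{P}_i^{(t)}$ itself. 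A secondary point worth settling before redoing the bound: the paper's proof reads $\parallel\cdot\parallel_\infty$ as the maximum absolute \emph{entry}, whereas you use the induced max-row-sum norm; under the paper's reading the factor $\parallel M_j\parallel_\infty\le|V_i|$ you introduce is not where $|V_i|$ enters at all --- it enters through the column-sum perturbation above.
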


For the multiple networks with the same node set (i.e., multiplex networks), we have a faster convergence rate.

 \begin{theorem}
 \label{th:converge_p}
 When applying \rwm on multiple networks with the same node set, for any small tolerance $0<\epsilon <1$, for all $i$, $\parallel \mathcal{P}_i^{(t+1)}-\mathcal{P}_i^{(t)} \parallel_\infty< \epsilon$,  when $t >  \lceil \log_\lambda\frac{\epsilon}{K}\rceil$. \end{theorem}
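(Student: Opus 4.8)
The plan is to exploit that for multiplex networks every cross-transition matrix is the identity, which collapses Eq.~\eqref{eq:general_p} into a convex combination of the \emph{fixed} single-network transition matrices; then the only $t$-dependence of $\mathcal{P}_i^{(t)}$ lives in one probability vector, the normalized weight row $\hat{\mathbf{W}}^{(t)}(i,\cdot)$, and it suffices to show that this row stops moving fast.

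First I would specialize Eq.~\eqref{eq:general_p}: with $\mathbf{S}_{i\rightarrow j}=\mathbf{S}_{j\rightarrow i}=\mathbf{I}$ it reads $\mathcal{P}_i^{(t)}=\sum_{j=1}^{K}\hat{\mathbf{W}}^{(t)}(i,j)\,\mathbf{P}_j$. Because $\sum_j\hat{\mathbf{W}}^{(t)}(i,j)=1$ and each $\mathbf{P}_j$ is column-stochastic, this convex combination is already column-stochastic, so the post-hoc column normalization is vacuous and $\mathcal{P}_i^{(t+1)}-\mathcal{P}_i^{(t)}=\sum_j\big(\hat{\mathbf{W}}^{(t+1)}(i,j)-\hat{\mathbf{W}}^{(t)}(i,j)\big)\mathbf{P}_j$. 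So the theorem reduces to controlling the change of a single row of $\hat{\mathbf{W}}$.

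Next I would unroll Eq.~\eqref{eq:ws} from $\mathbf{W}^{(0)}=\mathbf{I}$ to get $\mathbf{W}^{(t)}(i,j)=\mathbf{I}(i,j)+\sum_{s=1}^{t}\lambda^{s}\cos(\mathbf{x}_i^{(s)},\mathbf{x}_j^{(s)})$. Since the walkers' probability vectors are non-negative (an easy induction using $\hat{\mathbf{W}}^{(t)}\ge0$), every cosine lies in $[0,1]$ and the diagonal cosine is $1$, so the row sum $R_i^{(t)}:=\sum_k\mathbf{W}^{(t)}(i,k)\ge1$ for all $t$ — the normalization never divides by anything small. Putting $b:=\lambda^{t+1}\sum_k\cos(\mathbf{x}_i^{(t+1)},\mathbf{x}_k^{(t+1)})\in[0,K\lambda^{t+1}]$ and letting $q$ be the stochastic vector with entries $q_j=\cos(\mathbf{x}_i^{(t+1)},\mathbf{x}_j^{(t+1)})/\sum_k\cos(\mathbf{x}_i^{(t+1)},\mathbf{x}_k^{(t+1)})$, the update gives the interpolation $\hat{\mathbf{W}}^{(t+1)}(i,\cdot)=\frac{R_i^{(t)}}{R_i^{(t)}+b}\hat{\mathbf{W}}^{(t)}(i,\cdot)+\frac{b}{R_i^{(t)}+b}q$. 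Substituting this back, $\mathcal{P}_i^{(t+1)}-\mathcal{P}_i^{(t)}=\frac{b}{R_i^{(t)}+b}\big(\widetilde{\mathcal{Q}}-\mathcal{P}_i^{(t)}\big)$ where $\widetilde{\mathcal{Q}}:=\sum_j q_j\mathbf{P}_j$ is again column-stochastic. Every entry of $\widetilde{\mathcal{Q}}$ and of $\mathcal{P}_i^{(t)}$ lies in $[0,1]$, so every entry of the difference lies in $[-1,1]$, and with $R_i^{(t)}\ge1$ this yields $\parallel\mathcal{P}_i^{(t+1)}-\mathcal{P}_i^{(t)}\parallel_\infty\le\frac{b}{R_i^{(t)}+b}\le b\le K\lambda^{t+1}<K\lambda^{t}$. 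Forcing $K\lambda^{t}\le\epsilon$ — which holds as soon as $t>\lceil\log_\lambda\frac{\epsilon}{K}\rceil$, since $\log_\lambda$ is decreasing for $0<\lambda<1$ — closes the proof for every $i$.

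The step I expect to be delicate is this last norm estimate: bounding $\parallel\sum_j(\hat{\mathbf{W}}^{(t+1)}(i,j)-\hat{\mathbf{W}}^{(t)}(i,j))\mathbf{P}_j\parallel_\infty$ the naive way — term-by-term triangle inequality together with $\|q-\hat{\mathbf{W}}^{(t)}(i,\cdot)\|_1\le2$ — loses a constant factor, and routing it through an operator $\infty$-norm rather than the entrywise $\infty$-norm reintroduces a spurious $|V_i|$. Keeping $\widetilde{\mathcal{Q}}-\mathcal{P}_i^{(t)}$ intact as a difference of two column-stochastic matrices is exactly what makes the constant come out to the claimed $K$; and the very ingredients that vanish under $\mathbf{S}=\mathbf{I}$ — the products $\mathbf{S}_{j\rightarrow i}\mathbf{P}_j\mathbf{S}_{i\rightarrow j}$, a then-nontrivial renormalization of $\mathcal{P}_i^{(t)}$, and the cosine terms $\cos(\mathbf{x}_i^{(t)},\mathbf{S}_{j\rightarrow i}\mathbf{x}_j^{(t)})$ — are what forced the larger $K^2(|V_i|+2)$ factor in Theorem~\ref{th:converge_general}.
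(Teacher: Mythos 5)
Your proposal is correct, and it reaches the paper's bound $\lambda^t K$ by a genuinely different (and cleaner) mechanism. Both arguments share the same skeleton: with $\mathbf{S}_{i\rightarrow j}=\mathbf{I}$ the matrix $\mathcal{P}_i^{(t)}=\sum_j\hat{\mathbf{W}}^{(t)}(i,j)\mathbf{P}_j$ is already column-stochastic, so everything reduces to how fast the normalized row $\hat{\mathbf{W}}^{(t)}(i,\cdot)$ drifts, using the two facts that each increment to $\mathbf{W}$ is $\lambda^t\cos(\cdot)\in[0,\lambda^t]$ and that the row sum $\sum_k\mathbf{W}^{(t)}(i,k)\geq 1$. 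Where you diverge is in converting that drift into a bound on $\parallel\mathcal{P}_i^{(t+1)}-\mathcal{P}_i^{(t)}\parallel_\infty$: the paper partitions the indices into $L_i=\{j:\hat{\mathbf{W}}^{(t+1)}(i,j)\geq\hat{\mathbf{W}}^{(t)}(i,j)\}$ and its complement, observes that the two partial sums have entries of opposite sign so the entrywise max is controlled by the larger of the two, and then bounds each partial sum separately by $\lambda^tK$ via termwise estimates on the normalized weights. You instead derive the exact identity $\hat{\mathbf{W}}^{(t+1)}(i,\cdot)=\frac{R_i^{(t)}}{R_i^{(t)}+b}\hat{\mathbf{W}}^{(t)}(i,\cdot)+\frac{b}{R_i^{(t)}+b}q$ with $b\leq K\lambda^{t+1}$, which collapses the whole difference to $\frac{b}{R_i^{(t)}+b}(\widetilde{\mathcal{Q}}-\mathcal{P}_i^{(t)})$, a scalar at most $K\lambda^{t+1}$ times a difference of two column-stochastic matrices with entries in $[-1,1]$. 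This buys you three things: you avoid the paper's slightly loose step equating the norm of the signed sum with the max of the two parts (only an inequality is actually needed there); you get the marginally sharper constant $K\lambda^{t+1}$ rather than $K\lambda^{t}$; and, as you note, the single-identity form makes transparent why the constant is $K$ and not $K|V_i|$ — a point the paper's proof leaves implicit in its reliance on the entrywise rather than operator $\infty$-norm. Your closing remark about which ingredients force the $K^2(|V_i|+2)$ factor in Theorem~\ref{th:converge_general} also matches the paper's subsequent treatment of the general case.
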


\begin{proof}
In the multiplex networks, cross-transition matrices are just $\mathbf{I}$, so the stochastic property of  $\mathcal{P}_i^{(t)}$ can be naturally guaranteed without the column-normalization of $\mathcal{P}_i^{(t)}$. We then define $\Delta(t+1)=\parallel \mathcal{P}_i^{(t+1)}-\mathcal{P}_i^{(t)}\parallel_\infty$.

Based on Eq.~(\ref{eq:general_p}) and Eq.~(\ref{eq:ws}), we have 
\begin{equation*}
\begin{small}
\begin{aligned}
 \Delta(t+1)=\parallel&\mathcal{P}_i^{(t+1)}-\mathcal{P}_i^{(t)}\parallel_\infty\\
  =\parallel& \sum_{j=0}^K[\hat{\mathbf{W}}^{(t+1)}(i,j)-\hat{\mathbf{W}}^{(t)}(i,j)]\mathbf{P}_j \parallel_\infty\\
  =\parallel& \sum_{j\in L_i}[\hat{\mathbf{W}}^{(t+1)}(i,j)-\hat{\mathbf{W}}^{(t)}(i,j)]\mathbf{P}_j\\
  &+\sum_{j\in \bar{L}_i}[\hat{\mathbf{W}}^{(t+1)}(i,j)-\hat{\mathbf{W}}^{(t)}(i,j)]\mathbf{P}_j \parallel_\infty\\
\end{aligned}
\end{small}
\end{equation*}
where $L_i=\{j|\hat{\mathbf{W}}^{(t+1)}(i,j)>=\hat{\mathbf{W}}^{(t)}(i,j)\}$, and $\bar{L}_i=\{1,2...,K\}-L_i$. Since all entries in $\mathbf{P}_j$ are non-negative, for all $j$, all entries in the first part are non-negative and all entries in the second part is non-positive. Thus, we have 
\begin{equation*}
\begin{small}
\begin{aligned}
  \Delta(t+1)=\text{max}&\{\parallel \sum_{j\in L_i}[\hat{\mathbf{W}}^{(t+1)}(i,j)-\hat{\mathbf{W}}^{(t)}(i,j)]\mathbf{P}_j \parallel_\infty,\\
    &\quad \parallel \sum_{j\in \bar{L}_i}[\hat{\mathbf{W}}^{(t+1)}(i,j)-\hat{\mathbf{W}}^{(t)}(i,j)]\mathbf{P}_j \parallel_\infty\}\\
\end{aligned}
\end{small}
\end{equation*}

Since for all $i$, score vector $\mathbf{x}^{(t)}_i$ is non-negative, we have $\text{cos}(\mathbf{x}^{(t)}_i,\mathbf{x}^{(t)}_k) \geq 0$. Thus, $\sum_k \mathbf{W}^{(t+1)}(i,k) \geq \sum_k \mathbf{W}^{(t)}(i,k)$. For the first part, we have
\begin{equation*}
\begin{small}
\begin{aligned}
    &\parallel\sum_{j\in L_i}[\hat{\mathbf{W}}^{(t+1)}(i,j)-\hat{\mathbf{W}}^{(t)}(i,j)]\mathbf{P}_j \parallel_\infty \\
     =&\parallel\sum_{j\in L_i}[\frac{\mathbf{W}^{(t+1)}(i,j)}{\sum_k \mathbf{W}^{(t+1)}(i,k)}-\frac{\mathbf{W}^{(t)}(i,j)}{\sum_k \mathbf{W}^{(t)}(i,k)}]\mathbf{P}_j \parallel_\infty\\
    \leq& \parallel\sum_{j\in L_i}[\frac{\mathbf{W}^{(t)}(i,j)+\lambda^t}{\sum_k \mathbf{W}^{(t)}(i,k)}-\frac{\mathbf{W}^{(t)}(i,j)}{\sum_k \mathbf{W}^{(t)}(i,k)}]\mathbf{P}_j \parallel_\infty\\
    \leq& \parallel\sum_{j\in L_i}\frac{\lambda^t}{\sum_k \mathbf{W}^{(t)}(i,k)}\mathbf{P}_j \parallel_\infty\\
    \leq&\parallel\sum_{j\in L_i}\lambda^t \mathbf{P}_j \parallel_\infty\\
    \leq& \lambda^tK
\end{aligned}
\end{small}
\end{equation*}
Similarly, we can prove the second part has the same bound.
\begin{equation*}
\begin{small}
\begin{aligned}
&\parallel\sum_{j\in \bar{L}_i}[\hat{\mathbf{W}}^{(t+1)}(i,j)-\hat{\mathbf{W}}^{(t)}(i,j)]\mathbf{P}_j \parallel_\infty \\
=&\parallel\sum_{j\in \bar{L}_i}[\frac{\mathbf{W}^{(t)}(i,j)}{\sum_k \mathbf{W}^{(t)}(i,k)}-\frac{\mathbf{W}^{(t+1)}(i,j)}{\sum_k \mathbf{W}^{(t+1)}(i,k)}]\mathbf{P}_j \parallel_\infty\\
=&\quad \begin{aligned}[t]
 \parallel \sum_{j\in \bar{L}_i}&[\frac{\mathbf{W}^{(t)}(i,j)}{\sum_k \mathbf{W}^{(t)}(i,k)}\\
    -&\frac{\mathbf{W}^{(t)}(i,j)+\lambda^{(t+1)}\text{cos}(\mathbf{x}_{i}^{(t+1)},\mathbf{x}_{j}^{(t+1)})}{\sum_k   
    \mathbf{W}^{(t)}(i,k)+\lambda^{(t+1)}\text{cos}(\mathbf{x}_{i}^{(t+1)},\mathbf{x}_{k}^{(t+1)})}]\mathbf{P}_j \parallel_\infty\\
\end{aligned}\\
\leq& \lambda^t | \frac{\sum_k\text{cos}(\mathbf{x}_{i}^{(t+1)},\mathbf{x}_{k}^{(t+1)}) -\sum_{j \in \bar{L}_i} \text{cos}(\mathbf{x}_{i}^{(t+1)},\mathbf{x}_{j}^{(t+1)})}
     { \sum_k\mathbf{W}^{(t)}(i,k)}|\\ 
     \leq& \lambda^t K
\end{aligned}
\end{small}
\end{equation*}
Thus, we have
\begin{equation*}
\begin{small}
\begin{aligned}
  \Delta(t+1)
    \leq \lambda^tK
\end{aligned}
\end{small}
\end{equation*}
Then, we know $\Delta(t+1) \leq \epsilon $ when $t > \lceil \log_\lambda\frac{\epsilon}{K}\rceil.$

Next, we consider the general case. In Theorem ~\ref{th:converge_general}, we discuss the weak convergence of the modified transition matrix $\mathcal{P}_i^{(t)}$ in general multiple networks. Because after each iteration, $\mathcal{P}_i^{(t)}$ needs to be column-normalized to keep the stochastic property, we let $\hat{\mathcal{P}}_i^{(t)}$ represent the column-normalized one. Then Theorem ~\ref{th:converge_general} is for the residual $\Delta(t+1)=\parallel\hat{\mathcal{P}}_i^{(t+1)}-\hat{\mathcal{P}}_i^{(t)}\parallel_\infty$.

 \begin{equation*}
 \begin{small}
     \begin{aligned}
         \Delta(t+1)=&\parallel \hat{\mathcal{P}}_i^{(t+1)}-\hat{\mathcal{P}}_i^{(t)}\parallel_\infty\\
         =&\textrm{max}\{|\hat{\mathcal{P}}_i^{(t+1)}(x,y)-\hat{\mathcal{P}}_i^{(t)}(x,y)|\}\\
         =&\textrm{max}\{|\frac{\mathcal{P}_i^{(t+1)}(x,y)}{\sum_z \mathcal{P}_i^{(t+1)}(z, y)}-\frac{\mathcal{P}_i^{(t)}(x,y)}{\sum_z \mathcal{P}_i^{(t)}(z, y)}|\}
     \end{aligned}
\end{small}
 \end{equation*}

Based on Eq.~\eqref{eq:ws}, we have for all $t$, $\frac{1}{K}\leq \sum_z \mathcal{P}_i^{(t)}(z,y)\leq1$. 

We denote $\textrm{min}\{ \mathcal{P}_i^{(t+1)}(x,y),  \mathcal{P}_i^{(t)}(x,y)\}$ as $p$ and $\textrm{min}\{ \sum_z \mathcal{P}_i^{(t+1)}(z,y), \sum_z \mathcal{P}_i^{(t)}(z,y)\}$ as $m$. From the proof of Theorem~\ref{th:converge_p} , we know that 
\[| \mathcal{P}_i^{(t+1)}(x,y)-  \mathcal{P}_i^{(t)}(x,y)|\leq \lambda^t K\] 
and \[|\sum_z \mathcal{P}_i^{(t+1)}(z,y)-\sum_z \mathcal{P}_i^{(t)}(z,y)| \leq \lambda^t K|V_i|\]

Then, we have
 \begin{equation*}
 \begin{small}
     \begin{aligned}
         &|\frac{\mathcal{P}_i^{(t+1)}(x,y)}{\sum_z \mathcal{P}_i^{(t+1)}(z,y)}-\frac{\mathcal{P}_i^{(t)}(x,y)}{\sum_z \mathcal{P}_i^{(t)}(z,y)}|\\
        \leq& \frac{p+\lambda^tK}{m}-\frac{p}{m+\lambda^tK|V_i|}\\
       \leq&  \frac{m\lambda^tK+|V_i|(\lambda^tK)^2+pK|V_i|\lambda^t}{m^2}\\
       =& \lambda^t \frac{mK+K^2|V_i|\lambda^t+pK|V_i|}{m^2} 
     \end{aligned}
\end{small}
 \end{equation*}
 
When $t>\lceil -\log_\lambda (K^2|V_i|)\rceil$, we have $\lambda^tK^2|V_i| \leq 1$. Thus,
  \begin{equation*}
     \begin{small}
     \begin{aligned}
     \Delta(t+1)\leq
       &\lambda^t \frac{mK+K^2|V_i|\lambda^t+pK|V_i|}{m^2}
       \leq \lambda^t K^2(|V_i|+2)
     \end{aligned}
     \end{small}
 \end{equation*}
Then, it's derived that $\Delta(t+1) \leq \epsilon $ when $t >  \lceil  \log_\lambda\frac{\epsilon}{K^2(|V_i|+2)} \rceil$.
 \end{proof}

\subsection{\rwm With Restart Strategy}
\label{section:rwrw}
RWM is a general random walk model for multiple networks and can be further customized into different variations. In this section, we integrate the idea of random walk with restart (RWR)~\cite{tong2006fast} into RWM. % and use it for local community detection. Note that the original RWR is only for a single network instead of multiple networks.

In RWR, at each time point, the random walker explores the network based on topological transitions with $\alpha (0<\alpha<1)$ probability and jumps back to the starting node with probability $1-\alpha$. The restart strategy enables RWR to obtain proximities of all nodes to the starting node.

Similarly, we apply the restart strategy for \rwm in updating visiting probability vectors. For the $i$\textsuperscript{th} walker, we have:
\begin{equation}
    \begin{small}
    \label{eq:rwr_general}
    \mathbf{x}_{i}^{(t+1)} =\alpha {\mathcal{P}}_i^{(t)} \mathbf{x}_i^{(t)}+(1-\alpha)\mathbf{x}_i^{(0)}
    \end{small}
\end{equation}
where $\mathcal{P}_i^{(t)}$ is obtained by Eq. \eqref{eq:general_p}. Since the restart component does not provide any information for the visited local structure, we dismiss this part when calculating local relevance weights. Therefore, we modify the $\text{cos}(\cdot,\cdot)$ in Eq. \eqref{eq:ws} and have:
\begin{equation}
    \begin{small}
    \label{eq:w_rwr}
    \begin{aligned}
    \mathbf{W}^{(t)}(i,j)&=\mathbf{W}^{(t-1)}(i,j)+\\
    &\lambda^{t} \textrm{cos} ((\mathbf{x}_{i}^{(t)}-(1-\alpha)\mathbf{x}_i^{(0)}),\mathbf{S}_{j\rightarrow i}(\mathbf{x}_{j}^{(t)}-(1-\alpha)\mathbf{x}_{j}^{(0)}))
    \end{aligned}
    \end{small}
\end{equation}

% We also prove that:
 \begin{theorem}
 \label{th:converge_score}
 Adding the restart strategy in \rwm does not affect the convergence {properties} in Theorems \ref{th:converge_general} and \ref{th:converge_p}. The visiting vector $\mathbf{x}_i^{(t)}(1\leq i\leq K)$ will also converge. 
 \end{theorem}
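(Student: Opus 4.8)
The plan is to prove Theorem~\ref{th:converge_score} in two parts, mirroring its two assertions. First I would argue that the weak-convergence bounds for $\mathcal{P}_i^{(t)}$ carry over essentially verbatim from the restart-free case. The key observation is that the modified transition matrix $\mathcal{P}_i^{(t)}$ in Eq.~\eqref{eq:general_p} depends on the walkers only through the relevance weights $\hat{\mathbf{W}}^{(t)}(i,j)$, and in the restart version the only change is the replacement of $\text{cos}(\mathbf{x}_i^{(t)},\mathbf{S}_{j\to i}\mathbf{x}_j^{(t)})$ by $\text{cos}(\mathbf{x}_i^{(t)}-(1-\alpha)\mathbf{x}_i^{(0)},\mathbf{S}_{j\to i}(\mathbf{x}_j^{(t)}-(1-\alpha)\mathbf{x}_j^{(0)}))$ in Eq.~\eqref{eq:w_rwr}. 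Since $\mathbf{x}_i^{(t)}$ and $\mathbf{x}_i^{(0)}$ are non-negative and $0<\alpha<1$, I would check that the shifted vector $\mathbf{x}_i^{(t)}-(1-\alpha)\mathbf{x}_i^{(0)} = \alpha\mathcal{P}_i^{(t-1)}\mathbf{x}_i^{(t-1)}$ is still non-negative, so each cosine term still lies in $[0,1]$ and the per-step increment $\lambda^t\cos(\cdot,\cdot)$ is still bounded by $\lambda^t$. The whole proof of Theorem~\ref{th:converge_p} (and hence Theorem~\ref{th:converge_general}) used only (i) $\mathbf{W}^{(t+1)}(i,j)-\mathbf{W}^{(t)}(i,j)\in[0,\lambda^t]$ and (ii) $\sum_k\mathbf{W}^{(t+1)}(i,k)\ge\sum_k\mathbf{W}^{(t)}(i,k)\ge 1$; both facts survive the substitution unchanged, so the same threshold on $t$ gives $\parallel\mathcal{P}_i^{(t+1)}-\mathcal{P}_i^{(t)}\parallel_\infty<\epsilon$.

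For the second assertion — convergence of the visiting vectors $\mathbf{x}_i^{(t)}$ themselves — I would set $\delta_i(t) = \parallel\mathbf{x}_i^{(t+1)}-\mathbf{x}_i^{(t)}\parallel_1$ (or the $\infty$-norm) and derive a contraction-plus-perturbation recurrence from Eq.~\eqref{eq:rwr_general}. Writing
\[
\mathbf{x}_i^{(t+1)}-\mathbf{x}_i^{(t)} = \alpha\mathcal{P}_i^{(t)}\mathbf{x}_i^{(t)} - \alpha\mathcal{P}_i^{(t-1)}\mathbf{x}_i^{(t-1)}
= \alpha\mathcal{P}_i^{(t)}\bigl(\mathbf{x}_i^{(t)}-\mathbf{x}_i^{(t-1)}\bigr) + \alpha\bigl(\mathcal{P}_i^{(t)}-\mathcal{P}_i^{(t-1)}\bigr)\mathbf{x}_i^{(t-1)},
\]
and using that $\mathcal{P}_i^{(t)}$ is column-stochastic (so it is an $\ell_1$-contraction on difference vectors, which sum to zero) while $\mathbf{x}_i^{(t-1)}$ is a probability vector, I get $\delta_i(t)\le \alpha\,\delta_i(t-1) + \alpha\,\parallel\mathcal{P}_i^{(t)}-\mathcal{P}_i^{(t-1)}\parallel_1$. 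The driving term is controlled by Part~1: from the proof of Theorem~\ref{th:converge_p} we have $\parallel\mathcal{P}_i^{(t)}-\mathcal{P}_i^{(t-1)}\parallel \le c_i\lambda^{t-1}$ for a constant $c_i$ depending on $K$ and $|V_i|$. Unrolling the recurrence yields $\delta_i(t)\le \alpha^t\delta_i(0) + \alpha c_i\sum_{s=1}^{t}\alpha^{t-s}\lambda^{s-1}$, and since both $\alpha<1$ and $\lambda<1$ the geometric-type sum is bounded by $O\!\bigl(\max(\alpha,\lambda)^{t}\bigr)$ (with the usual $t\,\rho^{t}$ correction if $\alpha=\lambda$). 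Hence $\sum_t\delta_i(t)<\infty$, so $\{\mathbf{x}_i^{(t)}\}$ is Cauchy in a finite-dimensional space and converges; one can also exhibit the limit as the fixed point of the limiting equation $\mathbf{x}_i^\ast = \alpha\mathcal{P}_i^\ast\mathbf{x}_i^\ast + (1-\alpha)\mathbf{x}_i^{(0)}$, which is uniquely solvable because $\mathbf{I}-\alpha\mathcal{P}_i^\ast$ is invertible for $\alpha<1$.

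I expect the main obstacle to be the coupling: $\mathcal{P}_i^{(t)}$ depends on all the walkers' vectors through $\mathbf{W}^{(t)}$, so a priori one cannot bound $\delta_i(t)$ without simultaneously controlling $\delta_j(t)$ for all $j$. The clean way around this is exactly the decoupling used above — Part~1 already bounds $\parallel\mathcal{P}_i^{(t)}-\mathcal{P}_i^{(t-1)}\parallel$ by an explicit $\lambda$-geometric sequence \emph{independently} of the $\delta_j$'s (the cosine terms are automatically in $[0,1]$ regardless of how the $\mathbf{x}_j$ behave), so the recurrence for $\delta_i$ becomes a genuine scalar recurrence with a known summable forcing term, and no system of mutually-dependent inequalities is needed. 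A secondary point to handle carefully is the column-normalization of $\mathcal{P}_i^{(t)}$ in the general (non-multiplex) case: I would note, as in the proof of Theorem~\ref{th:converge_general}, that the normalized $\hat{\mathcal{P}}_i^{(t)}$ still satisfies a bound of the form $\parallel\hat{\mathcal{P}}_i^{(t+1)}-\hat{\mathcal{P}}_i^{(t)}\parallel\le c_i'\lambda^{t}$, so the same contraction argument applies with $\hat{\mathcal{P}}_i^{(t)}$ in place of $\mathcal{P}_i^{(t)}$.
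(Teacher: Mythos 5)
Your proposal is correct, and it is worth noting that the paper does not actually supply a proof of Theorem~\ref{th:converge_score} --- it only sketches one in two sentences: first that $\lambda$ forces the weak convergence of $\mathcal{P}_i^{(t)}$ exactly as in Theorems~\ref{th:converge_general} and~\ref{th:converge_p}, and then that ``after obtaining a converged $\mathcal{P}_i^{(t)}$'' the Perron--Frobenius theorem yields convergence of $\mathbf{x}_i^{(t)}$ as for ordinary RWR. Your Part~1 coincides with the first half of that sketch and supplies the one detail the paper leaves implicit: the shifted vectors $\mathbf{x}_i^{(t)}-(1-\alpha)\mathbf{x}_i^{(0)}=\alpha\mathcal{P}_i^{(t-1)}\mathbf{x}_i^{(t-1)}$ remain non-negative, so the cosine increments stay in $[0,\lambda^t]$ and the row sums of $\mathbf{W}^{(t)}$ stay monotone and at least $1$, which is all the earlier proofs used. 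Your Part~2 takes a genuinely different (and more careful) route than the paper's second half: instead of pretending the transition matrix has already converged and appealing to Perron--Frobenius on the limit, you keep the time-varying matrix and run a contraction-plus-perturbation recurrence $\delta_i(t)\le\alpha\,\delta_i(t-1)+\alpha\|\mathcal{P}_i^{(t)}-\mathcal{P}_i^{(t-1)}\|$, with the forcing term summable by Part~1. This buys rigor precisely where the paper's sketch has a gap --- $\mathcal{P}_i^{(t)}$ never literally equals its limit at any finite $t$, so one must control the residual drift, which your argument does and the Perron--Frobenius shortcut does not; it also correctly identifies and resolves the cross-walker coupling by observing that the matrix-drift bound is independent of the $\delta_j$'s. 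Two small points to tighten when writing it up: column-stochasticity gives $\|\mathcal{P}v\|_1\le\|v\|_1$ for every $v$ (the sum-to-zero property is not needed for non-expansiveness, only $\alpha<1$ supplies the contraction), and since the paper's $\|\cdot\|_\infty$ is a max-entry norm you should insert the factor $|V_i|$ when converting its bound into the induced $\ell_1$ (max-column-sum) norm --- the forcing term remains geometric in $\lambda$, so nothing else changes.
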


We skip the proof here. The main idea is that $\lambda$ first guarantees the weak convergence of $\mathcal{P}_i^{(t)}$ ($\lambda$ has the same effect as in Theorem \ref{th:converge_general}). After obtaining a converged $\mathcal{P}_i^{(t)}$, Perron-Frobenius theorem~\cite{LZ:StoPro} can guarantee the convergence of $\mathbf{x}_i^{(t)}$ with a similar convergence proof of the traditional RWR model.

% To find the local community in network $G_{i}$, we follow the common process in the literature~\cite{andersen2006local,bian2017many,kloster2014heat,yao2018local}. We first sort nodes according to the converged score vector $\mathbf{x}^{(T)}_{i}$, where $T$ is the number of iterations. Suppose there are $L$ non-zero elements in $\mathbf{x}^{(T)}_{i}$, for each $l$ $(1 \leq l \leq L)$, we compute the conductance of the subgraph induced by the top $l$ ranked nodes. The node set with the smallest conductance will be considered as the detected local community.

\subsection{Time complexity of \rwm.}
As a basic method, we can iteratively update vectors until convergence or stop the updating at a given number of iterations $T$. Algorithm \ref{alg:two_multiplex} shows the overall process. 

In each iteration, for the $i$\textsuperscript{th} walkers (line 4-5),
we update $\mathbf{x}_i^{(t+1)}$ based on Eq. \eqref{eq:general_p} and \eqref{eq:rwr_general} (line 4). Note that we do not compute the modified transition matrix $\mathcal{P}_i^{(t)}$ and output. If we substitute Eq. \eqref{eq:general_p} in Eq. \eqref{eq:rwr_general}, we have a probability propagation  $\mathbf{S}_{j\rightarrow i} \mathbf{P}_j  \mathbf{S}_{i\rightarrow j}\mathbf{x}_i^{(t)}$ which reflects the information flow $G_i\rightarrow G_j\rightarrow G_i$. In practice, a network is stored as an adjacent list. So we only need to update the visiting probabilities of direct neighbors of visited nodes and compute the propagation from right to left. Then calculating $\mathbf{S}_{j\rightarrow i} \mathbf{P}_j  \mathbf{S}_{i\rightarrow j}\mathbf{x}_i^{(t)}$ costs $O(|V_i|+|E_{i-j}|+|E_j|)$. And the restart addition in Eq. \eqref{eq:rwr_general} costs $O(|V_i|)$. As a result, line 4 costs $O(|V_i|+|E_i|+\sum_{j\neq i}(|E_{i-j}|+|E_j|))$ where $O(|E_i|)$ is from the propagation in $G_i$ itself. 

In line 5, based on Eq. \eqref{eq:w_rwr}, it takes $O(|E_{i-j}|+|V_j|)$ to get $\mathbf{S}_{j\rightarrow i}\mathbf{x}_{j}^{(t)}$ and $O(|V_i|)$ to compute cosine similarities. Then updating ${\mathbf{W}}(i,j)$ (line 5) costs $O(|E_{i-j}|+|V_i|+|V_j|)$. 
In the end, normalization (line 6) costs $O(K^2)$ which can be ignored.

In summary, with $T$ iterations, power iteration method for \rwm costs $O((\sum_i |V_i|+\sum_i|E_i|+\sum_{i\neq j} |E_{i-j}|)KT)$. For the multiplex networks with the same node set $V$, the complexity shrinks to $O((\sum_i |E_i|+K|V|)KT)$ because of the one-to-one mappings in $E_{i-j}$.

Note that power iteration methods may propagate probabilities to the entire network. Next, we present two speeding-up strategies to restrict the probability propagation into only a small subgraph around the starting node.

\begin{algorithm}[t!]
\begin{small}
\caption{Random walk on multiple networks}
\label{alg:two_multiplex}
\KwIn{Transition matrices $\{\mathbf{P}_i\}_{i=1}^K$, cross-transition matrices $\{\mathbf{S}_{i\rightarrow j}\}_{i\neq j}$,  tolerance $\epsilon$, starting node $u_q\in G_q$, decay factor $\lambda$, restart factor $\alpha$, iteration number $T$}
\KwOut{Visiting prob. vec. for $K$ walkers ($\{\mathbf{x}_i^{(T)}\}_{i=1}^K$) }
\BlankLine

Initialize $\{\mathbf{x}_i^{(0)}\}_{i=1}^K$ based on Eq. \eqref{eq:ini_x0}, and $\mathbf{W}^{(0)}=\hat{\mathbf{W}}^{(0)}=\mathbf{I}$\;

\For{$t \leftarrow 0$ \KwTo $T$}{
    \For{$i\leftarrow 1$ \KwTo $K$}{
        calculate $\mathbf{x}_{i}^{(t+1)}$ according to Eq. \eqref{eq:general_p} and \eqref{eq:rwr_general}\;
        calculate $\mathbf{W}^{(t+1)}$ according to Eq. \eqref{eq:w_rwr}\;
    }
    $\hat{\mathbf{W}}^{(t+1)}\leftarrow$ row-normalize $\mathbf{W}^{(t+1)}$\;

}
\end{small}
\end{algorithm}
\section{Speeding Up}
\label{section:computation}
In this section, we introduce two approximation methods that not only dramatically improve computational efficiency but also guarantee performance.

\subsection{Early Stopping}\label{A1}
With the decay factor $\lambda$, the modified transition matrix of each network converges before the visiting score vector does. Thus, we can approximate the transition matrix with early stopping by splitting the computation into two phases. In the first phase, we update both transition matrices and score vectors, while in the second phase, we keep the transition matrices static and only update score vectors.

Now we give the error bound between the early-stopping updated transition matrix and the power-iteration updated one. In $G_i$, we denote ${\mathcal{P}}_i^{(\infty)}$ the converged modified transition matrix. 

In $G_i (1\leq i\leq K)$, when properly selecting the split time $T_e$, the following theorem demonstrates that we can securely approximate the power-iteration updated matrix ${\mathcal{P}}_i^{(\infty)}$ with ${\mathcal{P}}_i^{(T_e)}$. 
\begin{theorem}
 \label{th:error_general}
For a given small tolerance $\epsilon$, when $t>T_e=\lceil\log_\lambda \frac{\epsilon(1-\lambda)}{K^2(|V_i|+2)} \rceil$, $\parallel{\mathcal{P}}^{(\infty)}_i-{\mathcal{P}}_i^{(t)}\parallel_\infty < \epsilon$. For the multiplex networks with the same node set, we can choose $T_e=\lceil\log_\lambda \frac{\epsilon(1-\lambda)}{K}\rceil$ to get the same estimation bound $\epsilon$.
\end{theorem}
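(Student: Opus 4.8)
The plan is to turn the one-step convergence rate from Theorem~\ref{th:converge_general} into a bound on the distance to the limit by summing a geometric series. Recall that the proof of Theorem~\ref{th:converge_general} establishes, for all sufficiently large $s$, the per-step estimate $\parallel {\mathcal{P}}_i^{(s+1)}-{\mathcal{P}}_i^{(s)}\parallel_\infty \le \lambda^{s} K^2(|V_i|+2)$ in the general case (and the sharper $\parallel {\mathcal{P}}_i^{(s+1)}-{\mathcal{P}}_i^{(s)}\parallel_\infty \le \lambda^{s} K$ in the multiplex case, from the proof of Theorem~\ref{th:converge_p}). The first thing I would check is that the proposed split time $T_e=\lceil\log_\lambda \frac{\epsilon(1-\lambda)}{K^2(|V_i|+2)}\rceil$ is already large enough that these per-step bounds hold for \emph{every} $s\ge T_e$: since $\lambda^{T_e}\le \frac{\epsilon(1-\lambda)}{K^2(|V_i|+2)}<1$, we get $\lambda^{s}K^2|V_i|\le 1$ for all $s\ge T_e$, which is exactly the hypothesis used to derive the per-step bound in the proof of Theorem~\ref{th:converge_general}.

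Next I would apply the triangle inequality and telescope. For any $t\ge T_e$ and any $N>t$,
\begin{equation*}
\begin{small}
\parallel {\mathcal{P}}_i^{(N)}-{\mathcal{P}}_i^{(t)}\parallel_\infty \;\le\; \sum_{s=t}^{N-1}\parallel {\mathcal{P}}_i^{(s+1)}-{\mathcal{P}}_i^{(s)}\parallel_\infty \;\le\; K^2(|V_i|+2)\sum_{s=t}^{N-1}\lambda^{s}.
\end{small}
\end{equation*}
Because $0<\lambda<1$, the right-hand side is dominated by the convergent tail $K^2(|V_i|+2)\,\lambda^{t}/(1-\lambda)$, so $({\mathcal{P}}_i^{(t)})_t$ is Cauchy in the $\infty$-norm; this is what guarantees that the limit ${\mathcal{P}}_i^{(\infty)}$ exists in the first place. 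Letting $N\to\infty$ then gives
\begin{equation*}
\begin{small}
\parallel {\mathcal{P}}_i^{(\infty)}-{\mathcal{P}}_i^{(t)}\parallel_\infty \;\le\; \frac{K^2(|V_i|+2)}{1-\lambda}\,\lambda^{t}.
\end{small}
\end{equation*}

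Finally I would solve for the threshold. The right-hand side is strictly below $\epsilon$ precisely when $\lambda^{t}<\frac{\epsilon(1-\lambda)}{K^2(|V_i|+2)}$; since $\lambda<1$ makes $\log_\lambda$ decreasing, this is equivalent to $t>\log_\lambda\frac{\epsilon(1-\lambda)}{K^2(|V_i|+2)}$, and taking the ceiling yields the stated $T_e$. The multiplex case is word-for-word the same, using the per-step bound $\lambda^{s}K$ instead, which produces $T_e=\lceil\log_\lambda\frac{\epsilon(1-\lambda)}{K}\rceil$. I do not expect any real obstacle here: the only delicate point is the bookkeeping in the first step — confirming that the per-step estimates of Theorem~\ref{th:converge_general} are valid uniformly over the whole tail $s\ge T_e$ rather than from some unspecified index onward — and, as noted above, the choice of $T_e$ itself forces $\lambda^{s}K^2|V_i|\le 1$, so this is immediate. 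Everything else is a geometric-series computation and a monotonicity remark about $\log_\lambda$.
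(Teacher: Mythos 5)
Your proposal is correct and follows essentially the same route as the paper's own proof: both telescope the per-step bound $\lambda^{s}K^{2}(|V_i|+2)$ from the proof of Theorem~\ref{th:converge_general} into a geometric tail $\lambda^{t}K^{2}(|V_i|+2)/(1-\lambda)$ and solve for $t$. Your added check that the chosen $T_e$ already guarantees $\lambda^{s}K^{2}|V_i|\le 1$ on the whole tail (and hence that the sequence is Cauchy, so ${\mathcal{P}}_i^{(\infty)}$ exists) is a small but welcome piece of bookkeeping that the paper leaves implicit.
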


\begin{proof}
 According to the proof of Theorem~\ref{th:converge_general}, when $T_e>\lceil -\log_\lambda (K^2|V_i|)\rceil$, we have
  \begin{equation*}
  \begin{small}
  \begin{aligned}
    &\parallel {\mathcal{P}}^{(\infty)}_i-{\mathcal{P}}_i^{(T_e)}\parallel_\infty\\
    \leq& \parallel \sum\nolimits_{t=T_e}^\infty \parallel {\mathcal{P}}^{(t+1)}_i-{\mathcal{P}}_i^{(t)}\parallel_\infty\\
    \leq& \sum\nolimits_{t={T_e}}^\infty \lambda^t K^2(|V_i|+2)\\
    =&\frac{\lambda^{T_e}K^2(|V_i|+2)}{1-\lambda}
    \end{aligned}
    \end{small}
  \end{equation*}
  
  So we can select $T_e=\lceil\log_\lambda \frac{\epsilon(1-\lambda)}{K^2(|V_i|+2)} \rceil$ such that when $t>T_e$, $\parallel{\mathcal{P}}^{(\infty)}_i-{\mathcal{P}}_i^{(t)}\parallel_\infty < \epsilon$.
\end{proof}

The time complexity of the first phase is $O((\sum_i |V_i|+\sum_i|E_i|+\sum_{i\neq j} |E_{i-j}|)KT_e)$.

\begin{algorithm}[ht]
\begin{small}
\caption{Partial Updating}
\label{alg:approximate_update}
\KwIn{$\mathbf{x}_i^{(t)}$, $\mathbf{x}_i^{(0)}$, $\{\mathbf{P}_i\}_{i=1}^K$, $\{\mathbf{S}_{i\rightarrow j}\}_{i\neq j}$, $\lambda$, $\alpha$, $\theta$}
\KwOut{approximation score vector $\tilde{\mathbf{x}}_i^{(t+1)}$}
\BlankLine
initialize an empty queue $que$\;
initialize a zero vector $\mathbf{x}^{(t)}_{i0}$ with the same size with $\mathbf{x}_i^{(t)}$\;
$que$.push($Q$) where $Q$ is the node set with positive values in $\mathbf{x}_i^{(0)}$; mark nodes in $Q$ as \textit{visited}\;
$cover \leftarrow 0$\;
\While{$que$ \textsf{\upshape is not empty and} $cover<\theta$}{
$u \leftarrow que$.pop()\;
mark $u$ as visited\;
\For{\textsf{\upshape{each neighbor node $v$ of}} $u$}{
\If{$v$ \textsf{\upshape{is not marked as visited}}}{
$que$.push($v$)\;
}
}
$\mathbf{x}^{(t)}_{i0}[u] \leftarrow \mathbf{x}_i^{(t)}[u]$\;
$cover \leftarrow cover+ \mathbf{x}_i^{(t)}[u]$\;
}
Update $\tilde{\mathbf{x}}_i^{(t+1)}$ based on Eq. \eqref{eq:xapproximation}\;
\end{small}
\end{algorithm}

\subsection{Partial Updating}\label{A2}
In this section, we propose a heuristic strategy to further speed up the vector updating in Algorithm \ref{alg:two_multiplex} (line 4) by only updating a subset of nodes that covers most probabilities. 

Specifically, given a covering factor $\theta \in (0,1]$, for walker $i$, in the $t$\textsuperscript{th} iteration, we separate $\mathbf{x}_i^{(t)}$ into two non-negative vectors, $ \mathbf{x}_{i0}^{(t)}$ and $\Delta\mathbf{x}_i^{(t)}$, so that $\mathbf{x}_i^{(t)}= \mathbf{x}_{i0}^{(t)}+\Delta\mathbf{x}_i^{(t)}$, and $\parallel \mathbf{x}_{i0}^{(t)}\parallel_1 \geq \theta$. Then, we approximate $\mathbf{x}_i^{(t+1)}$ with 
\begin{align}
\begin{small}
\label{eq:xapproximation}
    \tilde{\mathbf{x}}_i^{(t+1)} = \alpha \mathcal{P}_i^{(t)}\mathbf{x}_{i0}^{(t)}+(1-\alpha \parallel\mathbf{x}_{i0}^{(t)}\parallel_1)\mathbf{x}_i^{(0)}
\end{small}
\end{align}

Thus, we replace the updating operation of $\mathbf{x}_i^{(t+1)}$  in Algorithm \ref{alg:two_multiplex} (line 4) with $\tilde{\mathbf{x}}_{i}^{(t+1)}$. The details are shown in Algorithm~\ref{alg:approximate_update}. Intuitively,  nodes close to the starting node have higher scores than nodes far away. Thus, we utilize the breadth-first search (BFS) to expand $\mathbf{x}_{i0}^{(t)}$ from the starting node set until $\parallel \mathbf{x}_{i0}^{(t)}\parallel_1 \geq \theta$ (lines 3-12). Then, in line 13,  we approximate the score vector in the next iteration according to Eq. (\ref{eq:xapproximation}).

Let $V_{i0}^{(t)}$ be the set of nodes with positive values in $\mathbf{x}_{i0}^{(t)}$, and $|E_{i0}^{(t)}|$ be the summation of out-degrees of nodes in $V_{i0}^{(t)}$. Then $|E_{i0}^{(t)}| \ll |E_i|$ dramatically reduces the number of nodes to update in each iteration. 

\section{An Exemplar application of \rwm}
\label{sec:app}
Random walk based methods are routinely used in various tasks, such as local community detection~\cite{andersen2006local,alamgir2010multi,li2015uncovering,kloster2014heat,MWC_2,bian2018multi,bian2019memory,yan2019constrained,veldt2019flow}, link prediction~\cite{martinez2017survey}, and network embedding~\cite{cao2015grarep, grover2016node2vec, perozzi2014deepwalk, tang2015line}. In this section, we take the local community detection as an example to show the application of the proposed \rwm. 
As a fundamental task in large network analysis, local community detection has attracted extensive attention recently. Unlike the time-consuming global community detection, the goal of local community detection is to detect a set of nodes with dense connections (i.e., the target local community) that contains a given query node or a query node set. Specifically, given a query node\footnote{For simplicity, the illustration is for one query node. We can easily modify our model for a set of query nodes by initializing the visiting vector $\mathbf{x}_q^{(0)}$ with uniform entry value 1/$n$ if there are $n$ query nodes.} $u_q$ in the query network $G_q$, the target of local community detection in multiple networks 
is to detect relevant local communities in all networks $G_i(1\leq i\leq K)$. 

To find the local community in network $G_{i}$, we follow the common process in the literature~\cite{andersen2006local,bian2017many,kloster2014heat,yao2018local}. By setting the query node as the starting node,  we first apply \rwm with restart strategy to calculate the converged score vector $\mathbf{x}^{(T)}_{i}$, according to which we sort nodes. $T$ is the number of iterations. Suppose there are $L$ non-zero elements in $\mathbf{x}^{(T)}_{i}$, for each $l$ $(1 \leq l \leq L)$, we compute the conductance of the subgraph induced by the top $l$ ranked nodes. The node set with the smallest conductance will be considered as the detected local community.
\section{Experiments}
\label{section:experiment}
We perform comprehensive experimental studies to evaluate the effectiveness and efficiency of the proposed methods. Our algorithms are implemented with C++.  The code and data used in this work are available.\footnote{https://github.com/flyingdoog/RWM/}
 All experiments are conducted on a PC with Intel Core I7-6700 CPU and 32 GB memory, running 64-bit Windows 10.

\subsection{Datasets}

\stitle{Datasets.} Eight real-world datasets are used to evaluate the effectiveness of the selected methods. Statistics are summarized in Table~\ref{tab:dataset}.

\stab{$\bullet$} \mitdata~\cite{kim2015community} has 10 networks, each with 91 nodes. Nodes represent phones and one edge exists if two phones detect each other under a mobile network. Each network describes connections between phones in a month. 
Phones are divided into two communities according to their owners' affiliations. 

\stab{$\bullet$} \brainnet~\cite{van2013wu} has 468 brain networks, one for each participant. In the brain network, nodes represent human brain regions and an edge depicts the functional association between two regions. Different participants may have different functional connections. Each network has 264 nodes, among which 177 nodes are studied to belong to 12 high-level functional systems, including auditory, memory retrieval, visual \textit{, etc}. Each functional system is considered a community.

 The other four are general multiple networks with different types of nodes and edges and many-to-many cross-edges between nodes in different networks.
 
\stab{$\bullet$} \sixng \& \nineng~\cite{ni2018co} are two multi-domain network datasets constructed from the 20-Newsgroup dataset. \sixng contains 5 networks of sizes \{600, 750, 900, 1050, 1200\} and \nineng consists of 5 networks of sizes \{900, 1125, 1350, 1575, 1800\}. Nodes represent news documents and edges describe their semantic similarities. The cross-network relationships are measured by cosine similarity between two documents from two networks. Nodes in the five networks in \sixng and \nineng are selected from 6 and 9 newsgroups, respectively. Each newsgroup is considered a community.

\stab{$\bullet$} \ci~\cite{lim2016bibliographic} is from an academic search engine, Citeseer. It contains a researcher collaboration network, a paper citation network, and a paper similarity network. The collaboration network has 3,284 nodes (researchers) and 13,781 edges (collaborations). The paper citation network has 2,035 nodes (papers) and 3,356 edges (paper citations). The paper similarity network has 10,214 nodes (papers) and 39,411 edges (content similarity). There are three types of cross-edges: 2,634 collaboration-citation relations, 7,173 collaboration-similarity connections, and 2,021 citation-similarity edges. 10 communities of authors and papers are labeled based on research areas.

\stab{$\bullet$} \dblp~\cite{tang2008arnetminer} consists of an author collaboration network and a paper citation network. The collaboration network has 1,209,164 nodes and 4,532,273 edges. The citation network consists of 2,150,157 papers connected by 4,191,677 citations. These two networks are connected by 5,851,893 author-paper edges. From one venue, we form an author community by extracting the ones who published more than 3 papers in that venue. We select communities with sizes ranging from 5 to 200, leading to 2,373 communities.

The above six datasets are with community information. We also include two datasets without label information for link prediction tasks only.

\stab{$\bullet$}  \athle~\cite{omodei2015characterizing}: is a directed and weighted multiplex network dataset, obtained from Twitter during an exceptional event: the 2013 World Championships in Athletics. The multiplex network makes use of 3 networks, corresponding to retweets, mentions and replies observed between 2013-08-05 11:25:46 and  2013-08-19 14:35:21. There are 88,804 nodes and (104,959, 92,370, 12,921) edges in each network, respectively.

\stab{$\bullet$} \gene~\cite{de2015structural} is a multiplex network describing different types of genetic interactions for organisms in the Biological General Repository for Interaction Datasets (BioGRID, thebiogrid.org), a public database that archives and disseminates genetic and protein interaction data from humans and model organisms. There are 7 networks in the multiplex network, including physical association, suppressive genetic interaction defined by inequality, direct interaction, synthetic genetic interaction defined by inequality, association,  colocalization, and additive genetic interaction defined by inequality. There are 6,570 nodes and 282,754 edges.

\begin{table}[ht]
    \begin{small}
    \centering
    \caption{Statistics of datasets}
    \label{tab:dataset}
    \begin{tabular}{c|c|c|c|c}
    \hline
   Dataset &\#Net. &\#Nodes  & \begin{tabular}{@{}c@{}} \#Inside \\ net. edges\end{tabular}
    &\begin{tabular}{@{}c@{}} \#Cross \\net. edges\end{tabular}\\
    \hline
    	 \mitdata   &10   &910       &14,289 & -- \\
	 \brainnet  &468    &123,552   &619,080 &--\\
	 \athle     &3      &88,804 & 210250 &-\\
	 \gene      &7      &6,570 & 28,2754 & - \\
	 \hline
	 \sixng     &5    &4,500     &9,000 &20,984 \\
	 \nineng    &5    &6,750     &13,500 &31,480\\
	 \ci        &3    &15,533    &56,548 &11,828\\
	 \dblp      &2    &3,359,321 &8,723,940 &5,851,893\\ 
    \bottomrule
    \end{tabular}
    \end{small}
\end{table}

\begin{table*}[ht!]
\centering
\begin{small}
\caption{Accuracy performances comparison}
\label{tab:ac}
\begin{tabular}{c|c|c|c|c|c|c}
\hline
Method   &\mitdata  &\brainnet &\sixng      &\nineng    &\ci    &\dblp \\
\hline
RWR      &0.621     &0.314     & 0.282	&0.261   &0.426   &0.120 \\
% \hline
MWC   	 &0.588     &0.366     &0.309   &0.246   &0.367   &0.116 \\
% \hline   
QDC      &0.150     &0.262     &0.174   &0.147  &0.132   &0.103 \\
% \hline
LEMON    &0.637     &0.266     &0.336   &0.250  &0.107   &0.090 \\
$k$-core &0.599     &0.189     &0.283	&0.200  &0.333  &0.052\\
\hline
MRWR 	 &0.671     &0.308     &0.470   &0.238  &0.398   &0.126 \\
% \hline
ML-LCD   &0.361 &0.171 &- &-  &-  &- \\
\hline
\rwm(ours)   &\textbf{0.732}* & \textbf{0.403}* &\textbf{0.552}*	& \textbf{0.277}*	  &\textbf{0.478}* &\textbf{0.133}*	\\
\hline\hline
Improve &9.09\% &10.11\% & 17.4\%& 6.13\%& 12.21\%& 5.56\%\\
\hline
  \multicolumn{7}{l}{*We do student-t test between \rwm and the second best method MRWR}\\
  \multicolumn{7}{l}{(* corresponds to $p<0.05$).}\\
\end{tabular}
\end{small}
\end{table*}

\subsection{Local community detection}

\subsubsection{Experimental setup}
We compare \rwm  with seven state-of-the-art local community detection methods. RWR~\cite{tong2006fast} uses a lazy variation of random walk to rank nodes and sweeps the ranked nodes to detect local communities. MWC~\cite{bian2017many} uses the multi-walk chain model to measure node proximity scores. QDC~\cite{wu2015robust} finds local communities by extracting query biased densest connected subgraphs. LEMON~\cite{li2015uncovering} is a local spectral approach. $k$-core~\cite{kCoreEfficient} conducts graph core-decomposition and queries the community containing the query node. Note that these five methods are for single networks. The following two are for multiple networks. ML-LCD~\cite{interdonato2017local} uses a greedy strategy to find the local communities on multiplex networks with the same node set. MRWR~\cite{yao2018local} only focuses on the query network. Besides, without prior knowledge, MRWR treats other networks contributing equally.

For our method, \rwm, we adopt two approximations and set $\epsilon=0.01, \theta = 0.9$. Restart factor $\alpha$ and decay factor $\lambda$ are searched from 0.1 to 0.9. Extensive parameter studies are conducted in Sec. \ref{sec:para}. For baseline methods, we tune parameters according to their original papers and report the best results. Specifically, for RWR, MWC, and MRWR, we tune the Restart factor from 0.1 to 0.9. Other parameters in MWC are kept the default values~\cite{bian2017many}. For QDC, we follow the instruction from the original paper and search the decay factor from 0.6 to 0.95~\cite{wu2015robust}. For LEMON~\cite{li2015uncovering}, the minimum community size and expand step are set to 5 and 6, respectively.  For ML-LCD, we adopt the vanilla Jaccard similarity as the node similarity measure~\cite{interdonato2017local}.

\subsubsection{Effectiveness Evaluation}

\stitle{Evaluation on detected communities.}
For each dataset, in each experiment trial, we randomly pick one node with label information from a network as the query. Our method \rwm can detect all query-relevant communities from all networks, while all baseline methods can only detect one local community in the network containing the query node. Thus, in this section, to be fair, we only compare detected communities in the query network. In Sec. \ref{sec:case}, we also verify that \rwm can detect relevant and meaningful local communities from other networks. 

Each experiment is repeated 1000 trials, and the Macro-F1 scores are reported in Table~\ref{tab:ac}. Note that ML-LCD can only be applied to the multiplex network with the same node set. 

From Table~\ref{tab:ac}, we see that, in general, random walk based methods including RWR, MWC, MRWR, and \rwm perform better than others. It demonstrates the advance of applying random walk for local community detection.
Generally, the size of detected communities by QDC is very small, while that detected by ML-LCD is much larger than the ground truth. $k$-core suffers from finding a proper node core structure with a reasonable size, it either considers a very small number of nodes or the whole network as the detected result. Second, performances of methods for a single network, including RWR, MWC, QDC, and LEMON, are relatively low, since the single networks are noisy and incomplete. MRWR achieves the second best results on \sixng, \mitdata, and \dblp but performs worse than RWR and MWC on other datasets. Because not all networks provide equal and useful assistance for detection, treating all networks equally in MRMR may introduce noises and decrease performance. Third, our method \rwm achieves the highest F1-scores on all datasets and outperforms the second best methods by 6.13\% to 17.4\%. We conduct student-t tests between our method RWM and the second best method, MRWR. Low $p$-values $(<0.05)$ indicate that our results are statistically significant. This is because \rwm can actively aggregate information from highly relevant local structures in other networks during updating visiting probabilities. We emphasize that only \rwm can detect relevant local communities from other networks except for the network containing the query node. Please refer to Sec. \ref{sec:case} for details.

\stitle{Ranking evaluation.}
To gain further insight into why \rwm outperforms others, we compare \rwm with other random walk based methods, i.e., RWR, MWC, and MRWR, as follows. Intuitively, nodes in the target local community should be assigned with high proximity scores w.r.t. the query node. Then we check the precision of top-ranked nodes based on score vectors of those models, i.e., \textit{prec.} = ($|\text{top-}k \text{nodes} \cap \text{ground truth}|)/k$

\begin{figure*}[t]
    \centering
    \subfigure[\mitdata]{\includegraphics[width=1.7in]{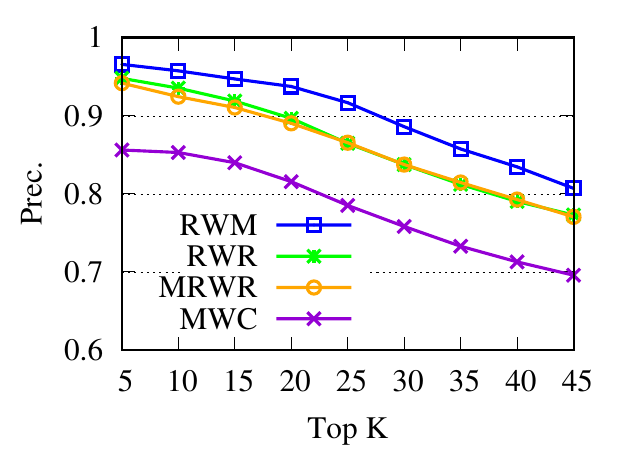}\label{fig:lcdrm}}%\hspace{-1.2em}
    \subfigure[\brainnet]{\includegraphics[width=1.71in]{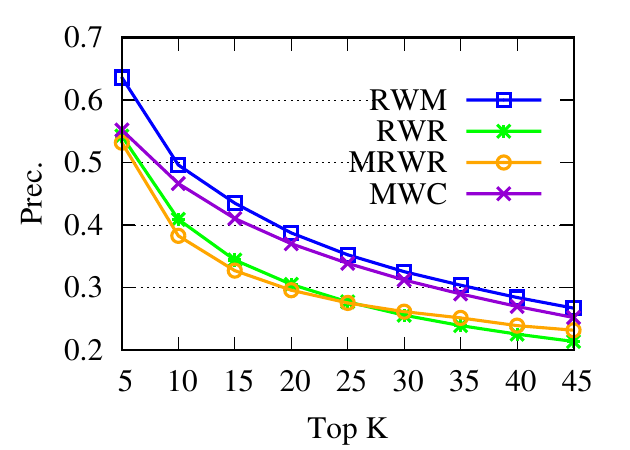}\label{fig:lcdBrain}}%\hspace{-1.2em}
    \subfigure[\sixng]{\includegraphics[width=1.71in]{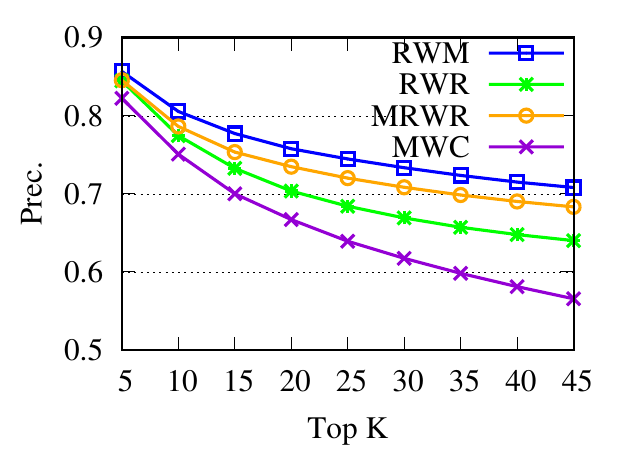}\label{fig:lcd6ng}}\\%\hspace{-1.2em}
    \subfigure[\nineng]{\includegraphics[width=1.71in]{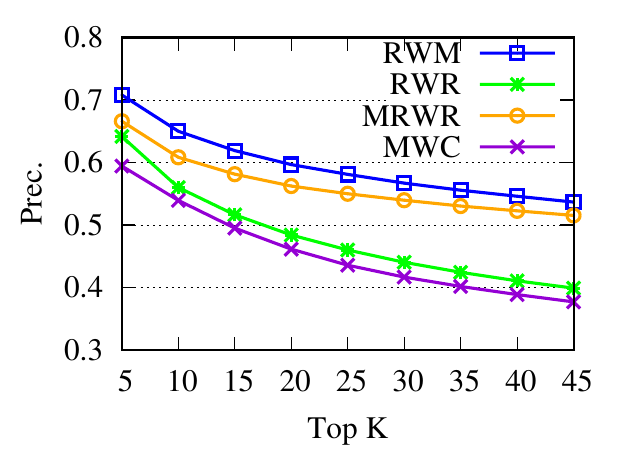}\label{fig:lcd9ng}}%\hspace{-1.2em}
    \subfigure[\ci]{\includegraphics[width=1.71in]{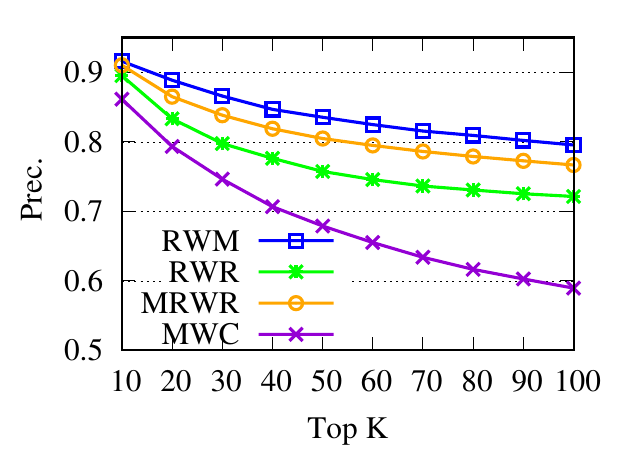}\label{fig:lcdCite}}%\hspace{-1.2em}
    \subfigure[\dblp]{\includegraphics[width=1.71in]{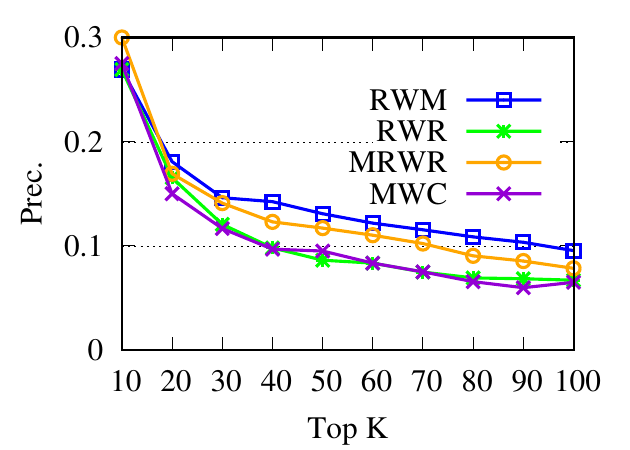}\label{fig:lcddblp}}
    \caption{Ranking evaluation of random walk based methods}
    \label{fig:lcd}
\end{figure*}
The precision results are shown in Fig.~\ref{fig:lcd}. First, we can see that the precision scores of the selected methods decrease when the size of the detected community $k$ increases. Second, our method consistently outperforms other random walk-based methods. Results indicate that \rwm ranks nodes in the ground truth more accurately. Since ranking is the basis of a random walk based method for local community detection, better node-ranking generated by \rwm leads to high-quality communities (Table \ref{tab:ac}).

\subsubsection{Parameter Study}\label{sec:para}
In this section, we show the effects of the four important parameters in \rwm: the restart factor $\alpha$, the decay factor $\lambda$, tolerance $\epsilon$ for early stopping, and the covering factor $\theta$ in the partial updating. We report the F1 scores and running time on two representative datasets \mitdata and \sixng. Note that $\epsilon$ and $\theta$ control the trade-off between running time and accuracy.

The parameter $\alpha$ controls the restart of a random walker in Eq. \eqref{eq:rwr_general}. The F1 scores and the running time \wrt $\alpha$ are shown in Fig.~\ref{fig:rmalpha} and Fig.~\ref{fig:6ngalpha}. When $\alpha$ is small, the accuracy increases as $\alpha$ increases because larger $\alpha$ encourages further exploration. When $\alpha$ reaches an optimal value, the accuracy begins to drop slightly. Because a large $\alpha$ impairs the locality property of the restart strategy. The running time increases when $\alpha$ increases because larger $\alpha$ requires more iterations for score vectors to converge.

$\lambda$ controls the updating of the relevance weights in Eq. \eqref{eq:w_rwr}.  Results in Fig.~\ref{fig:rmlambda} and Fig.~\ref{fig:6nglambda} reflect that for \mitdata, \rwm achieves the best result when $\lambda=0.9$. This is because large $\lambda$ ensures that enough neighbors are included when calculating relevance similarities. For \sixng, \rwm achieves high accuracy in a wide range of $\lambda$. For the running time, according to Theorem~\ref{th:error_general}, larger $\lambda$ results in larger $T_e$, i.e., more iterations in the first phase, and longer running time. 

$\epsilon$ controls $T_e$, the splitting time in the first phase. Instead of adjusting $\epsilon$, we directly tune $T_e$. Theoretically, a larger $T_e$ (i.e., a smaller $\epsilon$) leads to more accurate results. Based on results shown in Fig.~\ref{fig:rmepsilon} and Fig.~\ref{fig:6ngepsilon}, we notice that \rwm achieves good performance even with a small $T_e$ in the first phase. The running time decreases significantly as well. This demonstrates the rationality of early stopping.

$\theta$ controls the number of updated nodes (in Sec. \ref{A2}). In Fig.~\ref{fig:rmtheta} and \ref{fig:6ngtheta}, we see that the running time decreases along with $\theta$ decreasing because a smaller number of nodes are updated. The consistent accuracy performance shows the effectiveness of the speeding-up strategy.

\begin{figure}[t!]
\centering
\subfigure[\mitdata]{\includegraphics[width=1.7in]{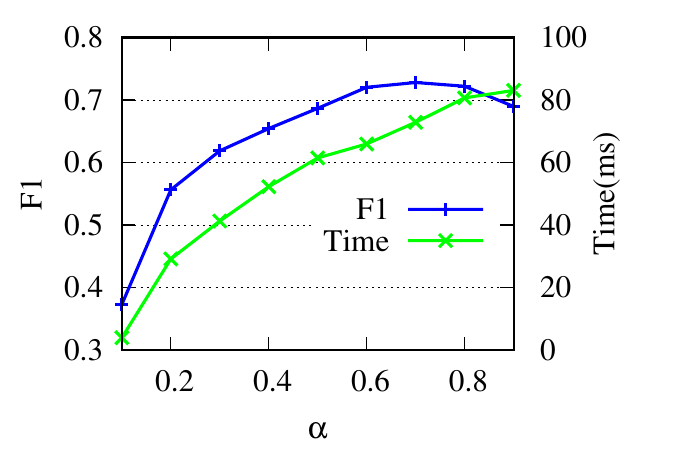}\label{fig:rmalpha}}\hspace{-1.2em}
\subfigure[\sixng]{\includegraphics[width=1.7in]{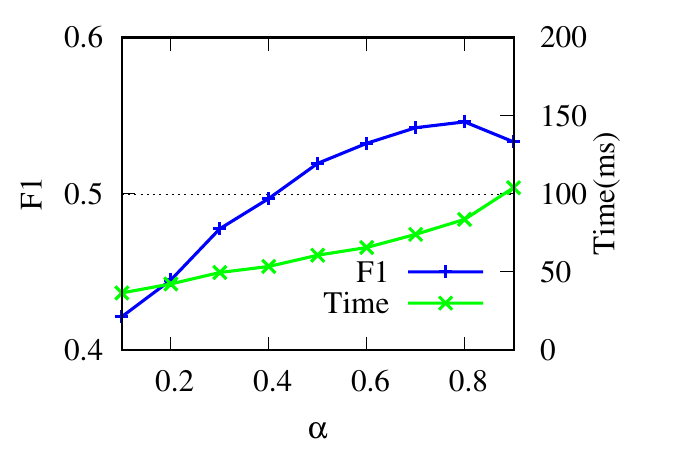}\label{fig:6ngalpha}}\hspace{-1.2em}\\
\subfigure[\mitdata]{\includegraphics[width=1.7in]{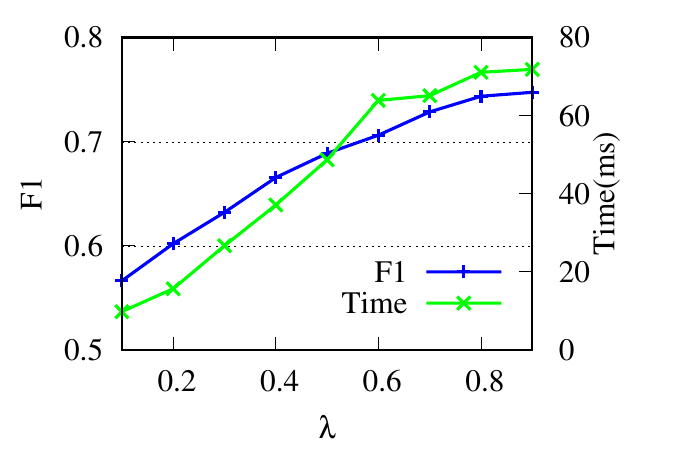}\label{fig:rmlambda}}\hspace{-1.2em}
\subfigure[\sixng]{\includegraphics[width=1.7in]{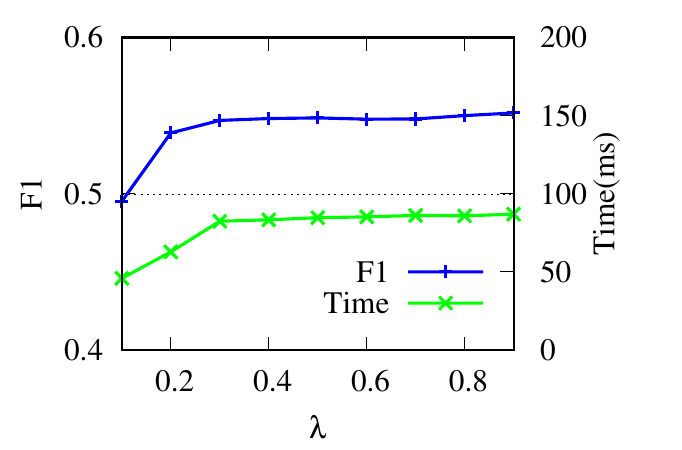}\label{fig:6nglambda}}\hspace{-1.2em}\\
\subfigure[\mitdata]{\includegraphics[width=1.7in]{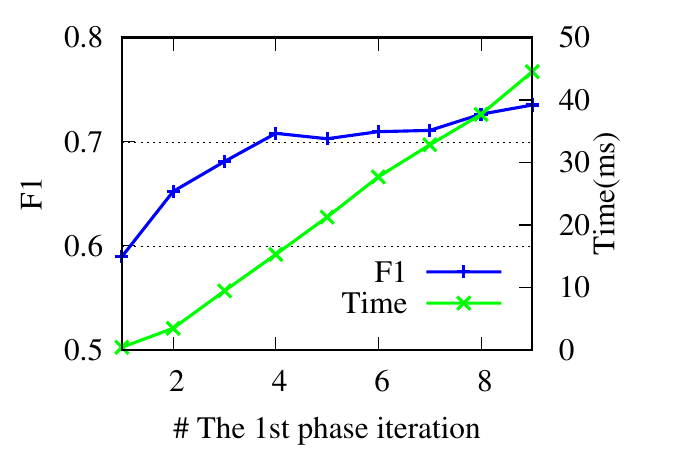}\label{fig:rmepsilon}}\hspace{-1.2em}
\subfigure[\sixng]{\includegraphics[width=1.7in]{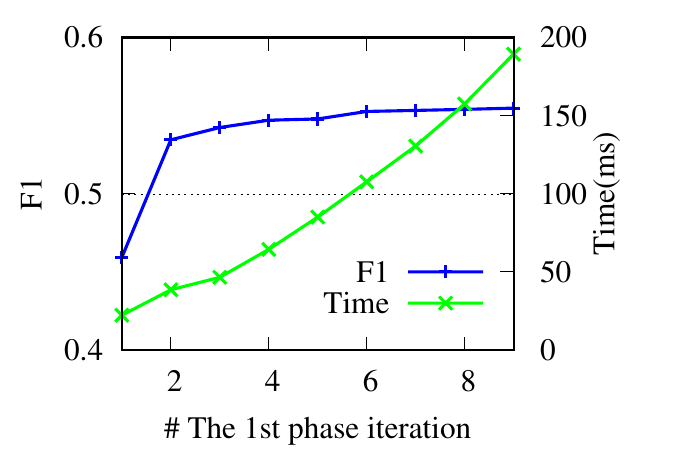}\label{fig:6ngepsilon}}\hspace{-1.2em}\\
\subfigure[\mitdata]{\includegraphics[width=1.7in]{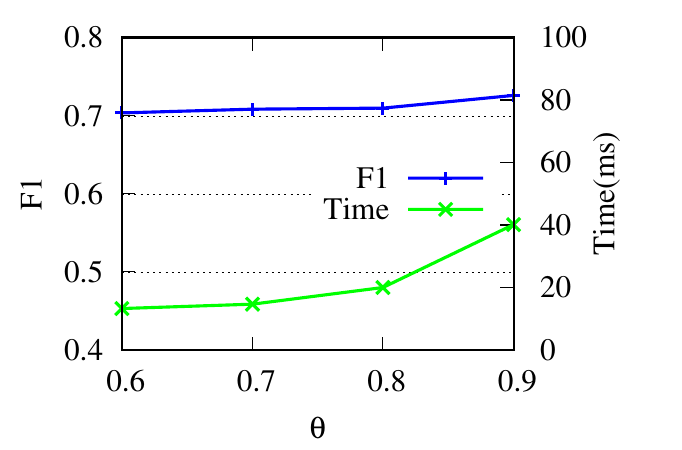}\label{fig:rmtheta}}\hspace{-1.2em}
\subfigure[\sixng]{\includegraphics[width=1.7in]{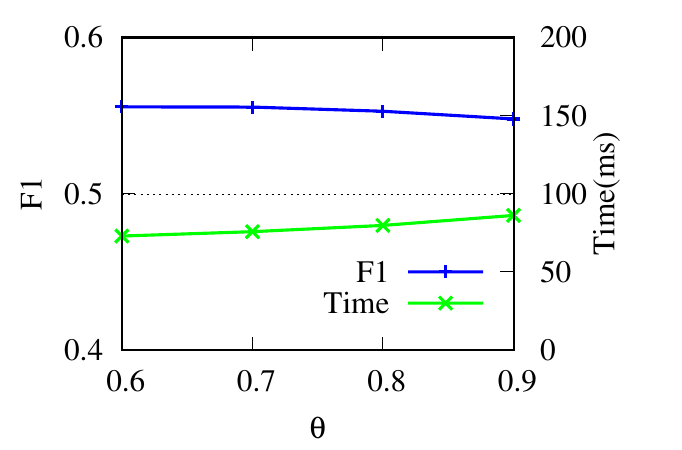}\label{fig:6ngtheta}}
\caption{Parameter study on \mitdata and \sixng}
\label{fig:para}
\end{figure}

\subsubsection{Case Studies}\label{sec:case}
\brainnet and \dblp are two representative datasets for multiplex networks (with the same node set) and the general multi-domain network (with flexible nodes and edges). We do two case studies to show the detected local communities by \rwm.
 
\stitle{Case Study on the \brainnet Dataset.}
Detecting and monitoring functional systems in the human brain is an important task in neuroscience. Brain networks can be built from neuroimages where nodes and edges are brain regions and their functional relations. In many cases, however, the brain network generated from a single subject can be noisy and incomplete. Using brain networks from many subjects helps to identify functional systems more accurately. For example, brain networks from three subjects are shown in Fig. \ref{fig:example}. Subjects 1 and 2 have similar visual conditions (red nodes); subjects 1 and 3 are with similar auditory conditions (blue nodes). For a given query region, we want to find related regions with the same functionality. 

\begin{figure}[t!]
    \centering
   \subfigure[Subject 1]{\includegraphics[width=0.95in]{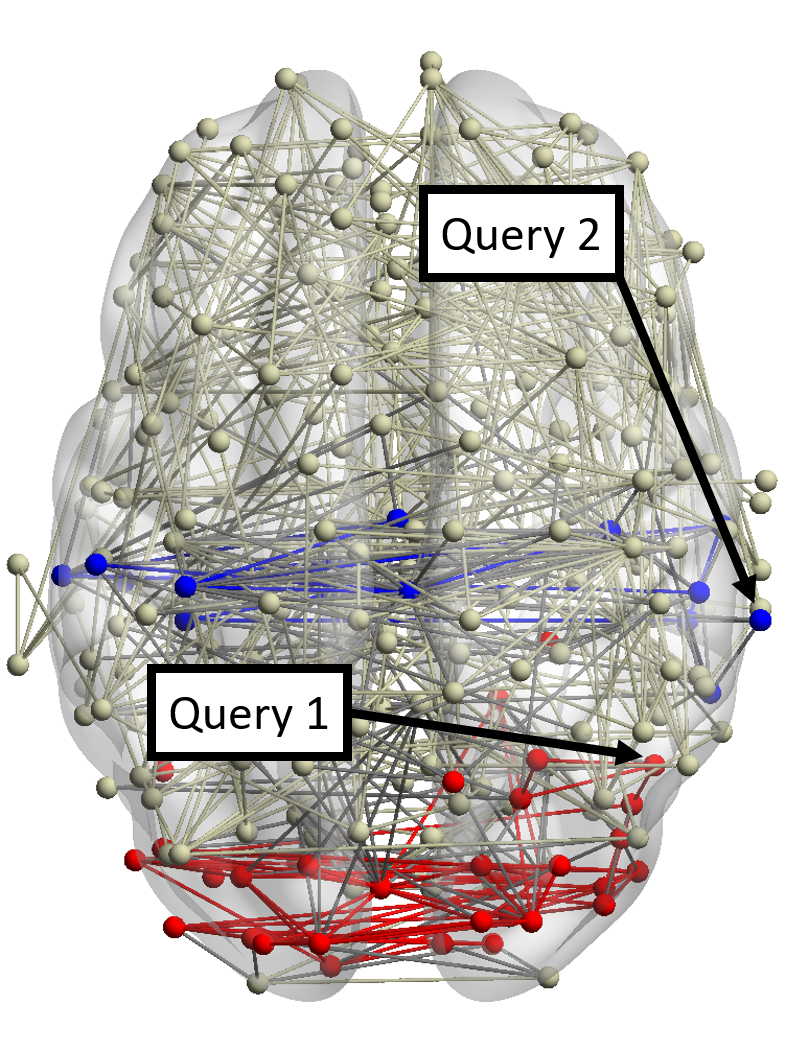}\label{fig:s0}}
   \quad
   \subfigure[Subject 2]{\includegraphics[width=0.95in]{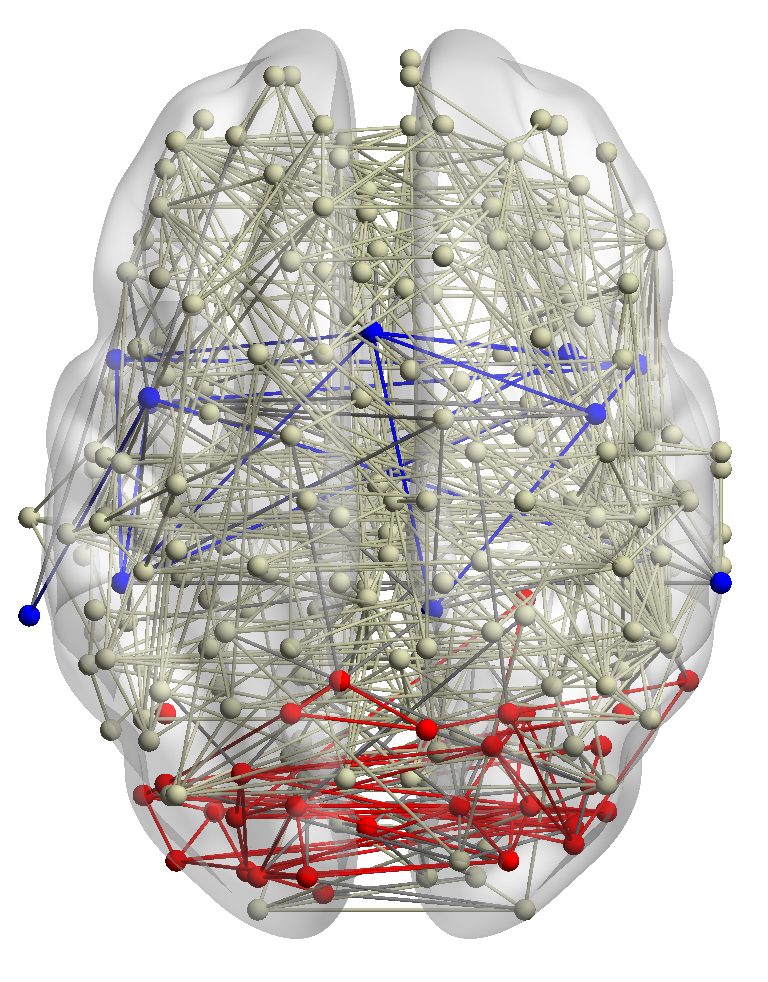}\label{fig:s1}}%\hspace{-1.5em}
   \quad
   \subfigure[Subject 3]{\includegraphics[width=0.95in]{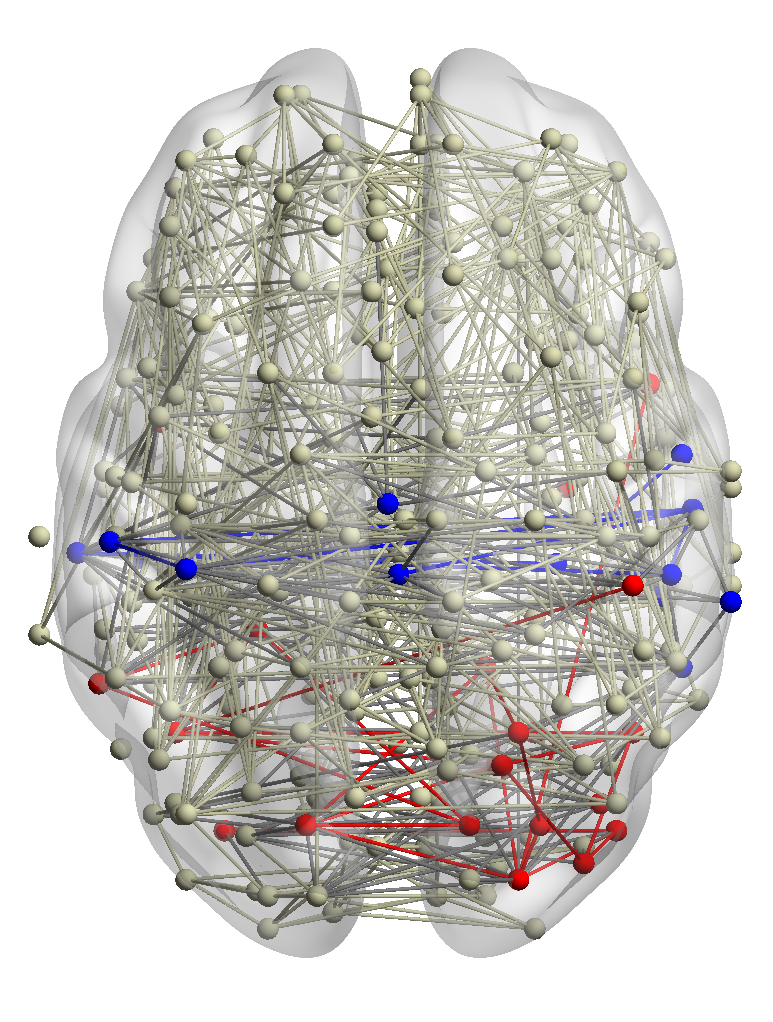}\label{fig:s2}}
 \caption{Brain networks of three subjects.}
     \label{fig:example}
\end{figure}

\stab{$\bullet$}\textbf{Detect relevant networks.} 
To see whether \rwm can automatically detect relevant networks, we run \rwm model for Query 1 and Query 2 in Fig.~\ref{fig:example} separately. Fig.~\ref{fig:weights} shows the cosine similarity between the visiting probability vectors of different walkers along iterations. $\mathbf{x}_1$, $\mathbf{x}_2$, and $\mathbf{x}_3$ are the three visiting vectors on the three brain networks, respectively. We see that the similarity between the visiting histories of walkers in relevant networks, \ie subjects 1 and 2 in Fig.~\ref{fig:w1}, subjects 1 and 3 in~\ref{fig:w2}, increases along with the updating. But similarities in query-oriented irrelevant networks are very low during the whole process. This indicates that \rwm can actively select query-oriented relevant networks to help better capture the local structure for each network.

\begin{figure}[t!]
    \centering
    \subfigure[Query 1]{\includegraphics[width=1.6in]{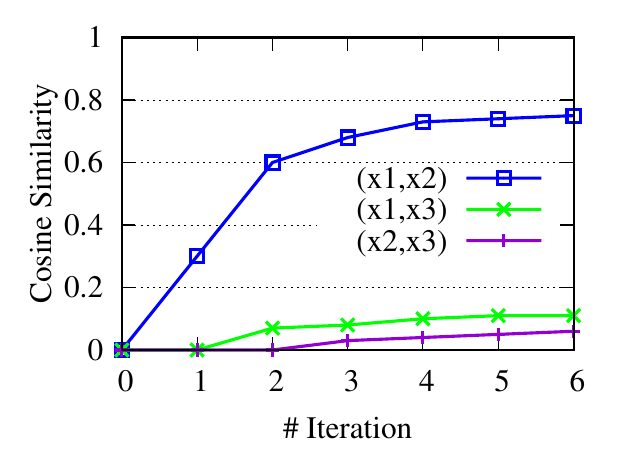}\label{fig:w1}}\hspace{-1em}
    \subfigure[Query 2]{\includegraphics[width=1.6in]{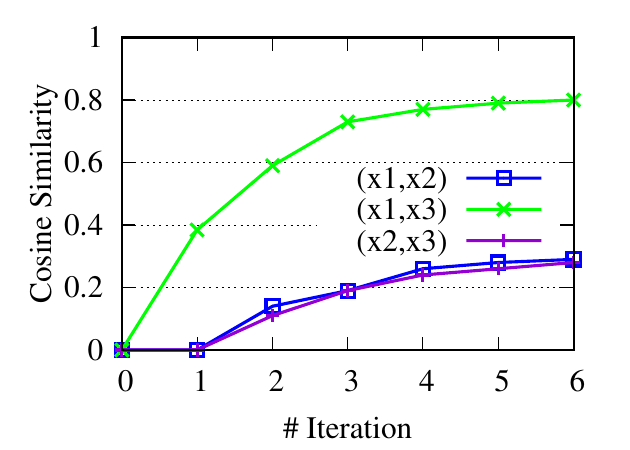}\label{fig:w2}}
    \caption{Similarity scores of walkers \wrt the iteration}
    \label{fig:weights}
\end{figure}

\stab{$\bullet$}\textbf{Identify functional systems.}
In this case study, we further evaluate the detected results of \rwm in the \brainnet dataset. Note that other methods can only find a query-oriented local community in the network containing the query node. 

Figure~\ref{fig:brain_Gt} shows the brain network of a subject in the \brainnet dataset. The nodes highlighted in red represent the suggested visual system of the human brain, which is used as the ground truth community. We can see that nodes in the visual system are not only functionally related but also spatially close. We choose a node from the visual system as the query node, which is marked in Figure~\ref{fig:brain_Gt}. We apply our method as well as other baseline methods to detect the local community in the brain network of this subject. The identified communities are highlighted in red in Figure~\ref{fig:brain_caseStudy}. From Figure~\ref{fig:brain_RWRM}, we can see that the community detected by our method is very similar to the ground truth. Single network methods, such as RWR (Figure~\ref{fig:brain_RWR}), MWC (Figure~\ref{fig:brain_MWC}), LEMON (Figure~\ref{fig:brain_LEMON}) and QDC (Figure~\ref{fig:brain_QDC}), suffer from the incomplete information in the single network. Compared to the ground truth, they tend to find smaller communities. MRWR (Figure~\ref{fig:brain_RWRM}) includes many false positive nodes. The reason is that MRWR assumes all networks are similar and treat them equally. ML-LCD (Figure~\ref{fig:mllcd}) achieves a relatively reasonable detection, while it still neglects nodes in the boundary area.

 \begin{figure}[ht]
 \centering
\subfigure[Groud Truth]{\includegraphics[width=1.2in]{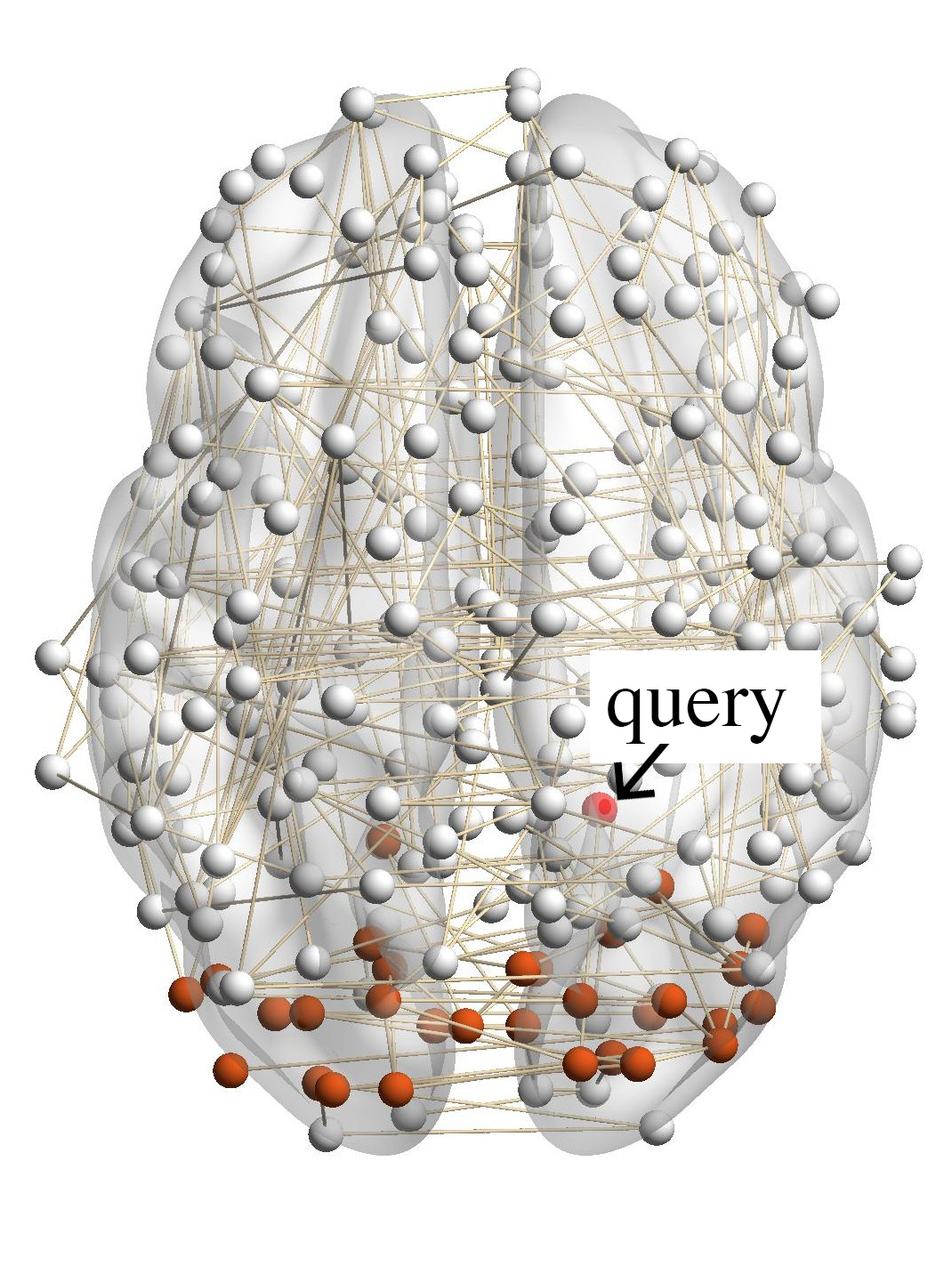}\label{fig:brain_Gt}}%\hspace{-0.8em}
\subfigure[\rwm]{\includegraphics[width=1.2in]{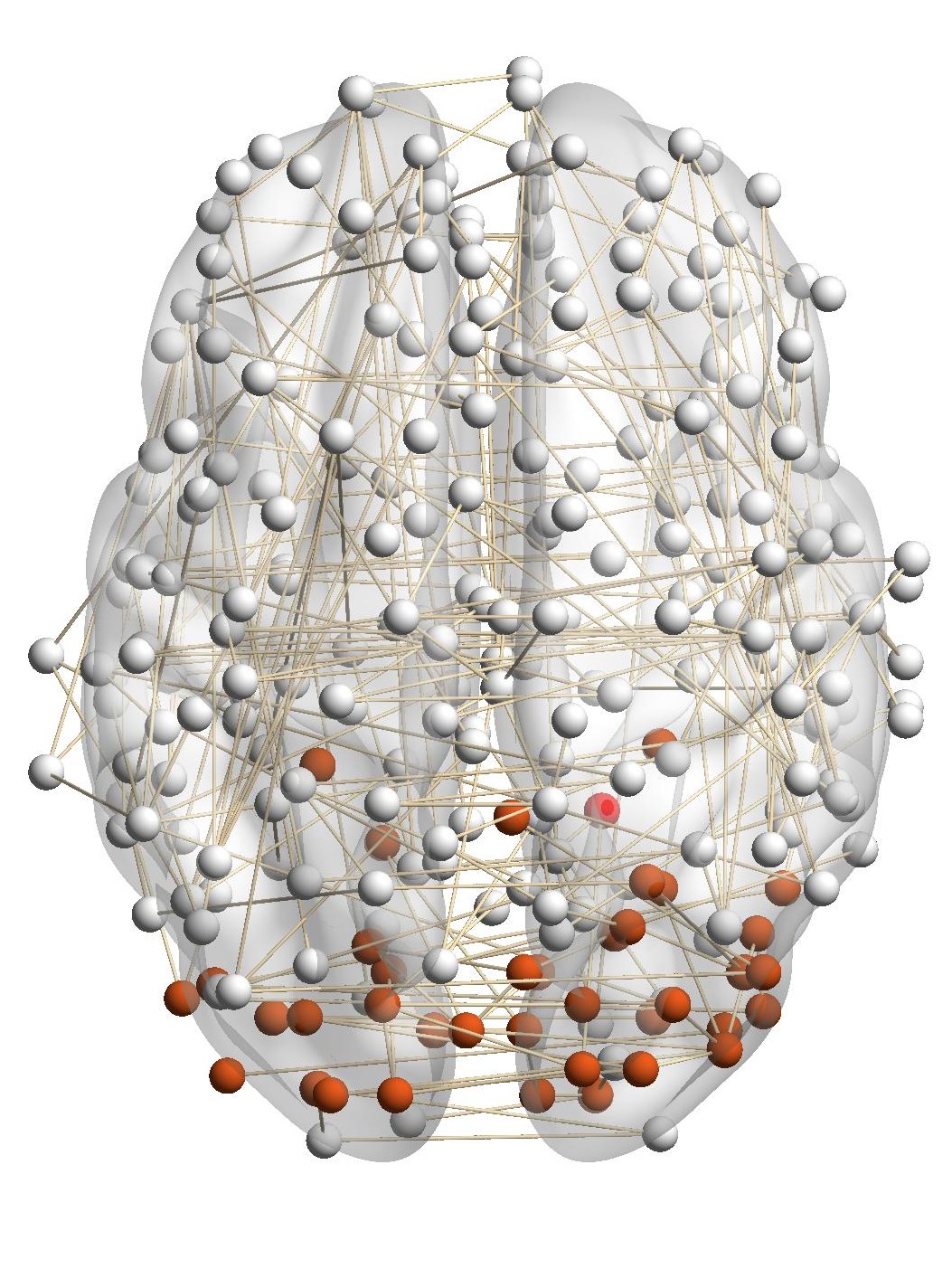}\label{fig:brain_RWRM}}\\ %\hspace{-0.8em}
\subfigure[RWR]{\includegraphics[width=1.2in]{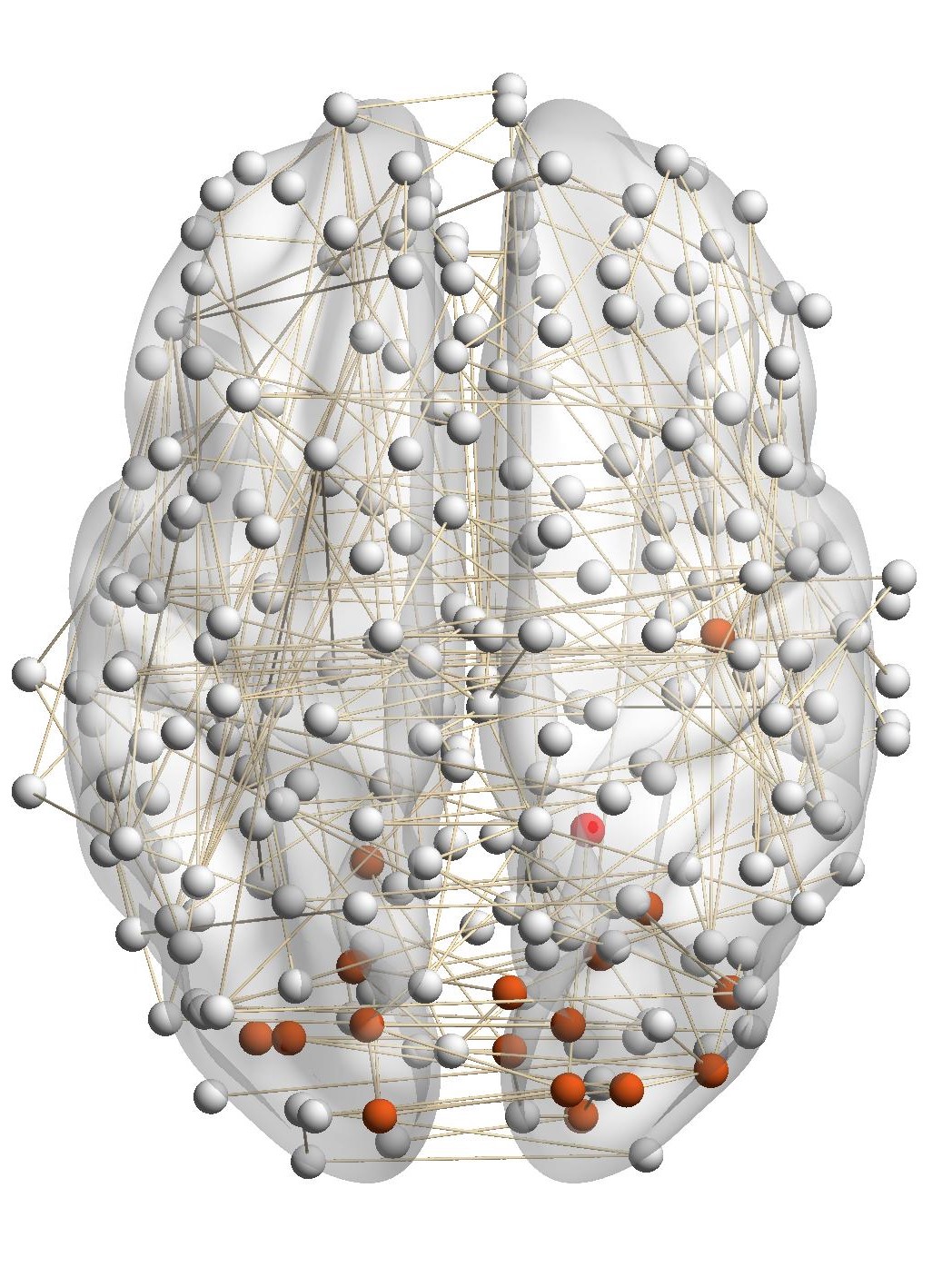}\label{fig:brain_RWR}}%\hspace{-0.8em}
\subfigure[MWC]{\includegraphics[width=1.2in]{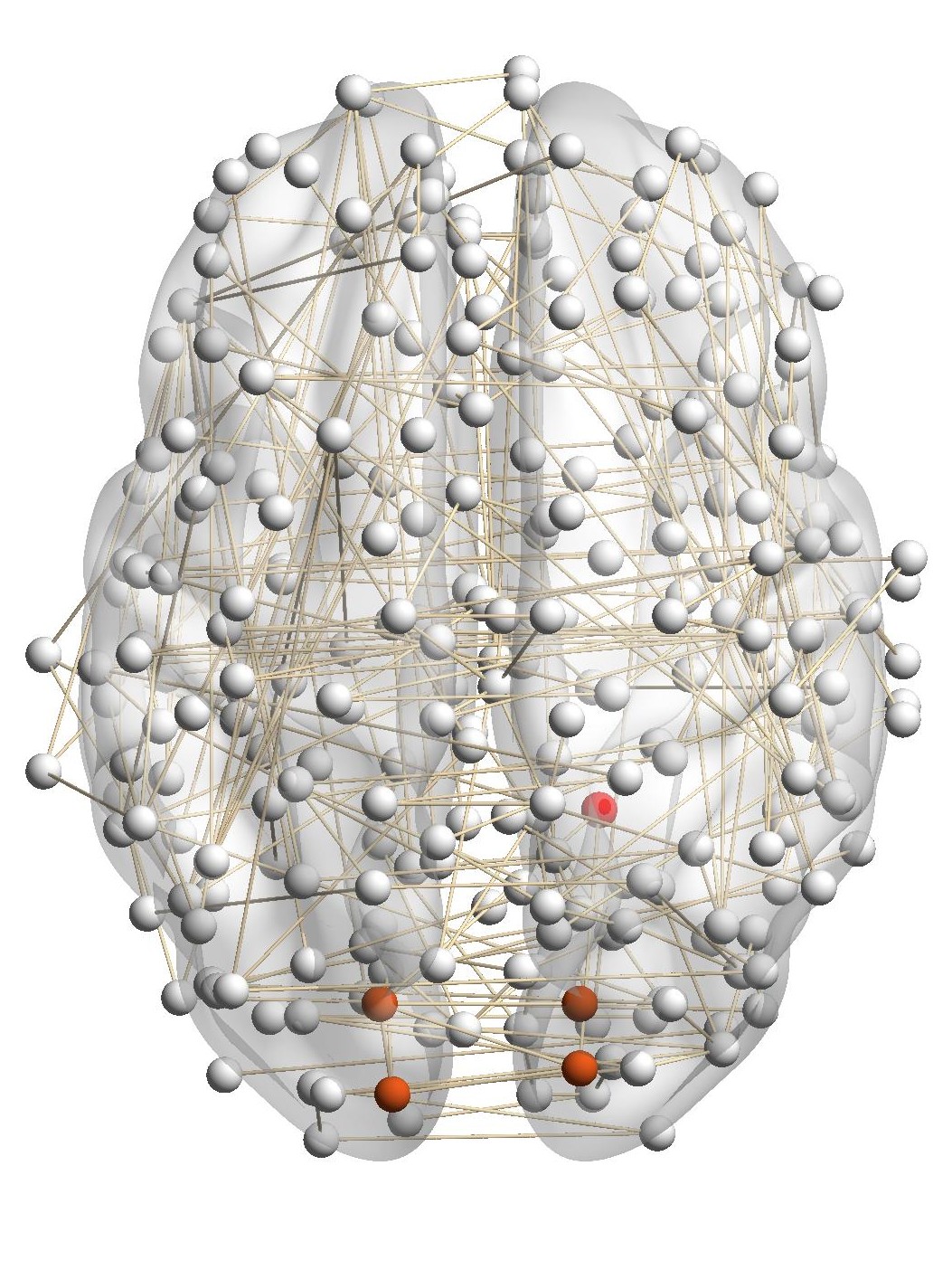}\label{fig:brain_MWC}}
\\
\subfigure[LEMON]{\includegraphics[width=1.2in]{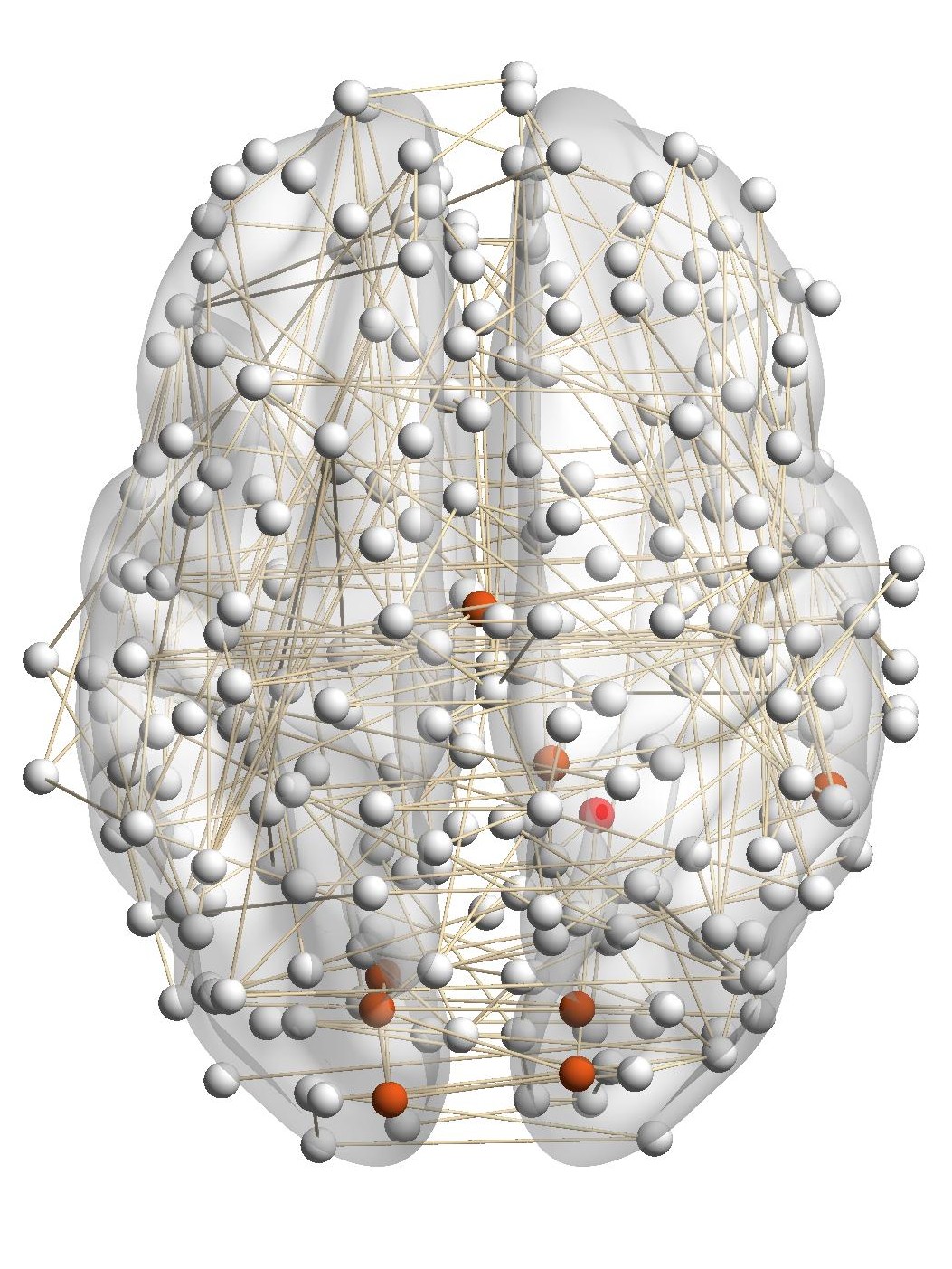}\label{fig:brain_LEMON}}%\hspace{-0.8em}
\subfigure[QDC]{\includegraphics[width=1.2in]{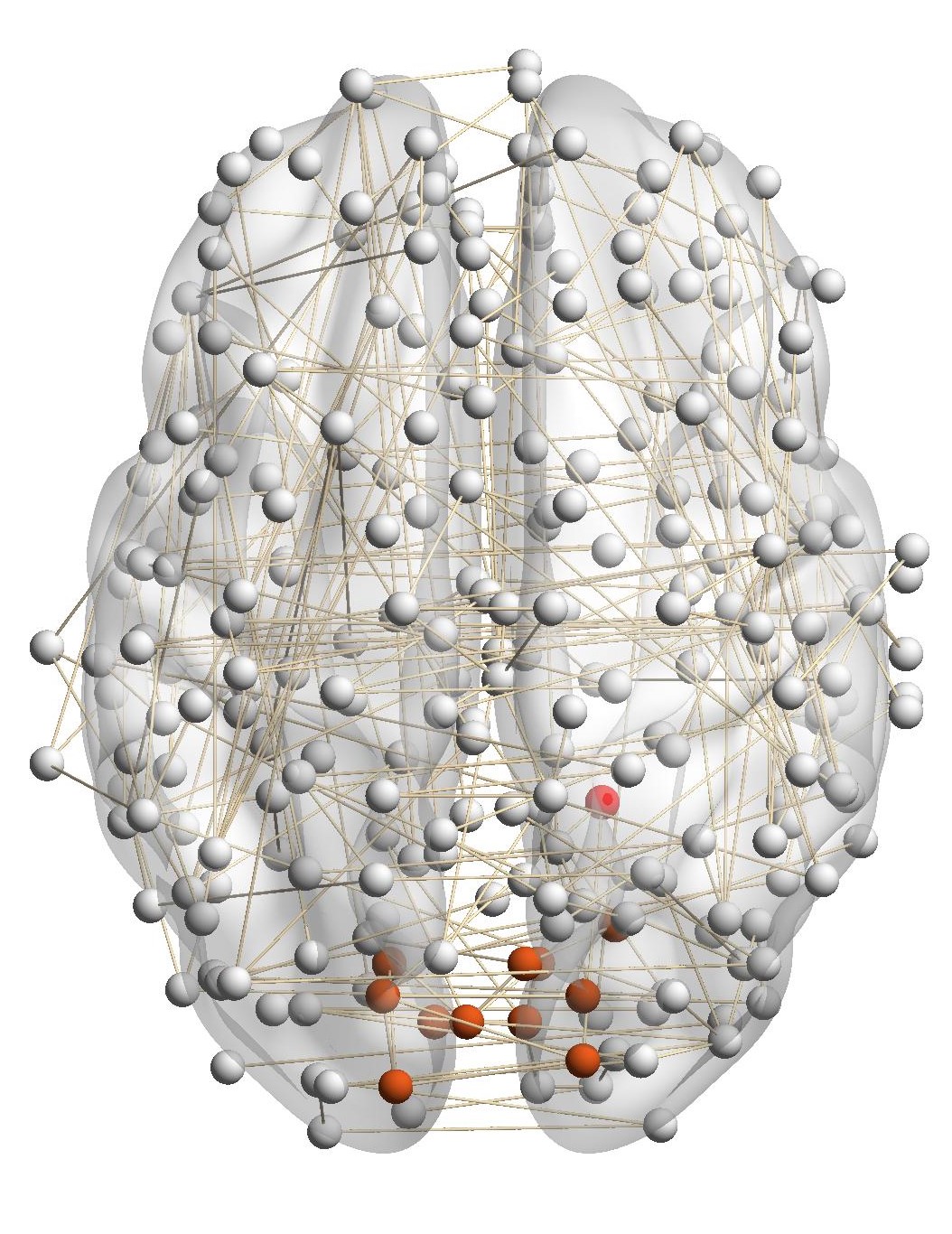}\label{fig:brain_QDC}}\\%\hspace{-0.8em}
\subfigure[MRWR]{\includegraphics[width=1.2in]{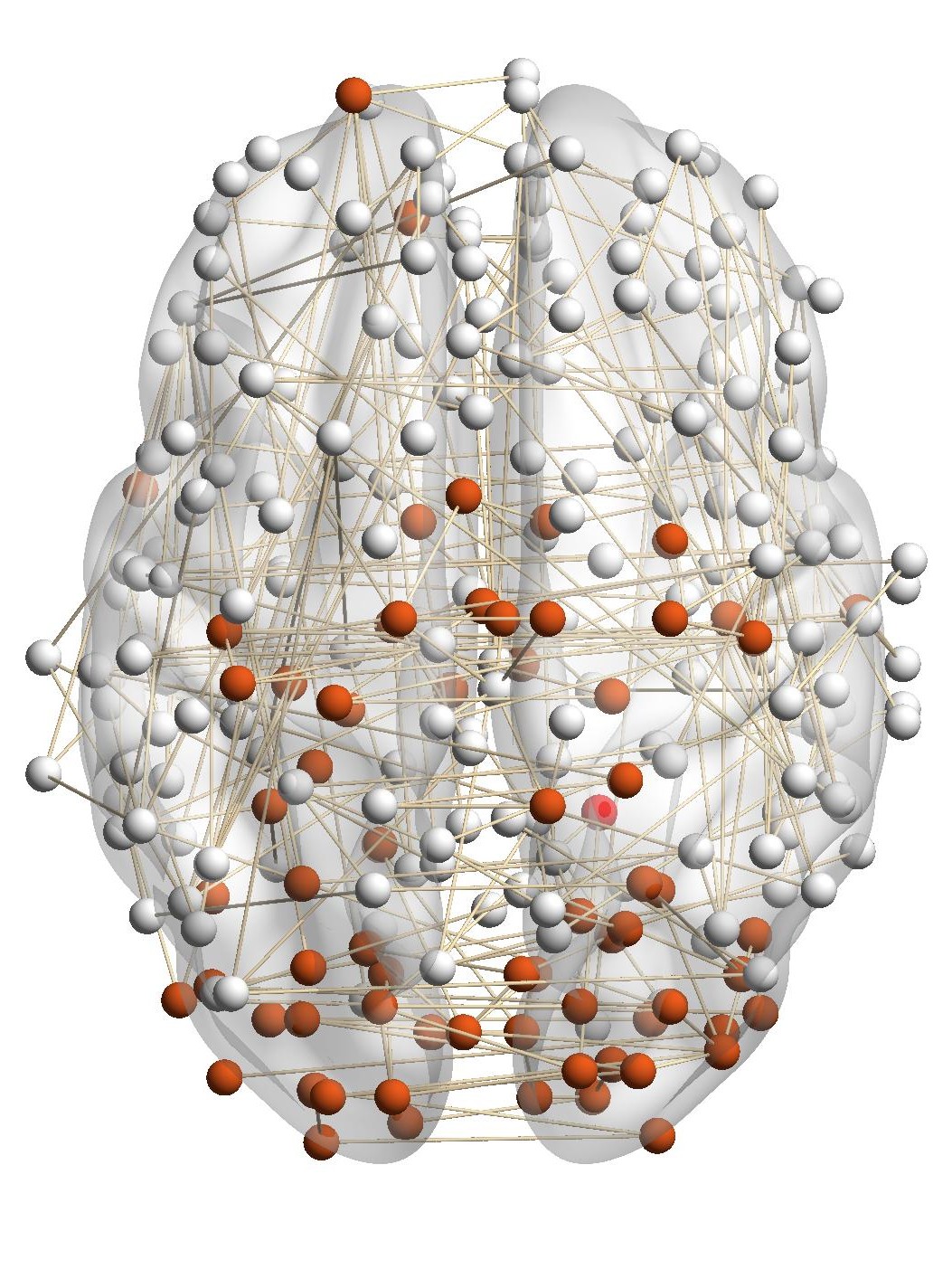}\label{fig:brain_MRWR}}%\hspace{-0.8em}
\subfigure[ML-LCD]{\includegraphics[width=1.2in]{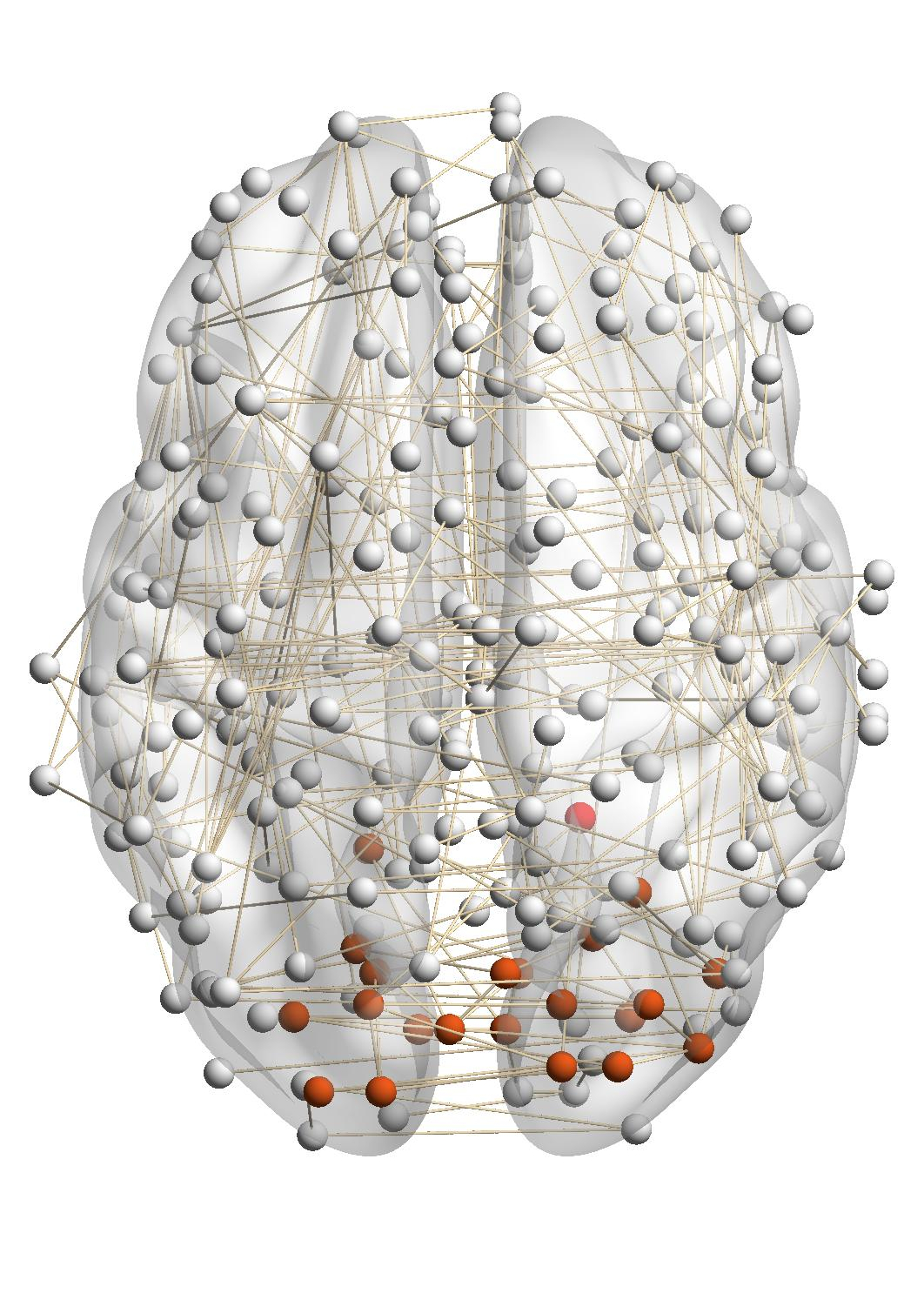}\label{fig:mllcd}}
\caption{Ground truth and detected visual systems by different algorithms given the query node.}
\label{fig:brain_caseStudy}
\end{figure}

\stitle{Case Study on \dblp Dataset.}
In the general multiple networks, for a query node from one network, we only have the ground truth local community in that network but no ground truth about the relevant local communities in other networks. So in this section, we use \dblp as a case study to demonstrate the relevance of local communities detected from other networks by \rwm. We use Prof. Danai Koutra from UMich as the query. The \dblp dataset was collected in May 2014 when she was a Ph.D. student advised by Prof. Christos Faloutsos. Table~\ref{tab:case_dblp} shows the detected author community and paper community. Due to the space limitation, instead of showing the details of the paper community, we list venues where these papers were published in. For example, ``KDD(3)'' means that 3 KDD papers are included in the detected paper community. The table shows that our method can detect local communities from both networks with high qualities. Specifically, the detected authors are mainly from her advisor's group. The detected paper communities are mainly published in Database/Data Mining conferences.

\begin{table}[ht]
    \centering
    \begin{small}
        \caption{Case study on \dblp}
        \label{tab:case_dblp}
    \begin{tabular}{c|c}
        \hline
         Author community&Paper community\\
         \hline
        Danai Koutra (query) & KDD (3) \\
        Christos Faloutsos & PKDD (3)\\
        Evangelos Papalexakis &PAKDD (3)\\
        U Kang & SIGMOD (2)\\
        Tina Eliassi-Rad & ICDM (2)\\
        Michele Berlingerio & TKDE(1)\\
        Duen Horng Chau & ICDE (1)\\
        Leman Akoglu & TKDD (1)\\
        Jure Leskovec & ASONAM (1)\\
Hanghang Tong & WWW(1)\\
        ... &...\\
        \hline
    \end{tabular}
    \end{small}
\end{table}

\subsection{Network Embedding}
Network embedding aims to learn low dimensional representations for nodes in a network to preserve the network structure. These representations are used as features in many tasks on networks, such as node classification~\cite{ZhaoZYXFC18}, recommendation~\cite{wen2018network}, and link prediction~\cite{chen2018pme}. Many network embedding methods consist of two components: one extracts contexts for nodes, and the other learns node representations from the contexts. For the first component, random walk is a routinely used method to generate contexts. For example, DeepWalk~\cite{perozzi2014deepwalk} uses the truncated random walk to sample paths for each node, and node2vec~\cite{grover2016node2vec} uses biased second order random walk. In this part, we apply the proposed \rwm for the multiple network embedding. \mitdata and \ci are used in this part.

\subsubsection{Experimental setup}
To evaluate \rwm on network embedding, we choose two classic network embedding methods: DeepWalk and node2vec, and replace the first components (sampling parts) with \rwm. We denote the DeepWalk and node2vec with replacements by DeepWalk\_RWM and node2vec\_RWM respectively. Specifically, For each node on the target network, \rwm based methods first use the node as the starting node and calculate the static transition matrix of the walker on the target network. Then truncated random walk (DeepWalk\_RWM) and second order random walk (node2vec\_RWM) are applied on the matrix to generate contexts for the node. we compare these two methods with traditional ones. Since we focus on the sampling phase of network embedding, to control the variables, for all four algorithms, we use word2vec (skip-gram model)~\cite{mikolov2013distributed} with the same parameters to generate node embeddings from sampling.

For all methods, the dimensionality of embeddings is 100. The walk length is set to 40 and walks per node is set to 10. For node2vec, we use a grid search over $p, q \in \{0.25, 0.5, 1, 2, 4\}$~\cite{grover2016node2vec}. For \rwm based methods, we set $\lambda = 0.7, \epsilon=0.01, \theta = 0.9$. Windows size of word2vec is set to its default value 5.

\subsubsection{Accuracy Evaluation.}
We compare different methods through a classification task. On each dataset, embedding vectors are learned from the full dataset. Then the embeddings are used as input to the SVM classifier~\cite{scikit-learn}. When training the classifier, we randomly sample a portion of the nodes as the training set and the rest as the testing set. The ratio of training is varied from 10\% to 90\%. We use NMI, Macro-F1, and Micro-F1 scores to evaluate classification accuracy. For \mitdata, we set a network as the target network at one time and report the average results here. For \ci, we only embed the collaboration network, since only authors are labeled. 

Fig. \ref{fig:embedding} shows the experimental results. \rwm based network embedding methods consistently perform better than traditional counterparts in terms of all metrics. This is because \rwm can use additional information on multiple networks to refine node embedding, while baseline methods are affected by noise and incompleteness on a single network. In most cases, node2vec achieves better results than DeepWalk in both the traditional setting and \rwm setting, which is similar to the conclusion drawn in~\cite{grover2016node2vec}. Besides, \rwm based methods are much better than other methods in \ci dataset, when the training ratio is small, \eg 10\%, which means \rwm is more useful in real practice when the available labels are scarce. This advantage comes from more information from other networks when embedding the target network. 

\begin{figure}[ht!]
\centering
\subfigure[NMI in \mitdata]{\includegraphics[width=1.6in]{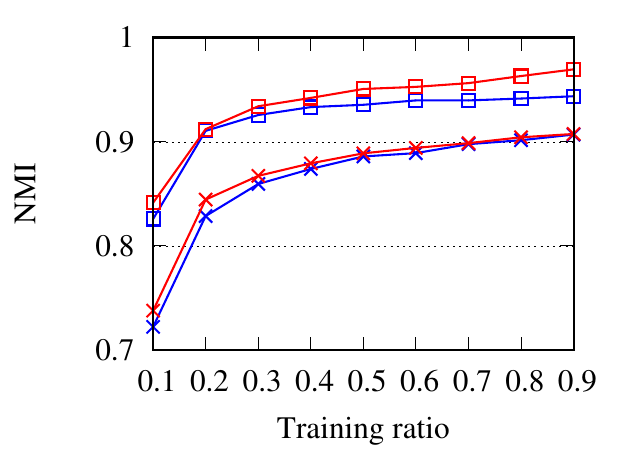}\label{fig:embeddingRMACC}}
\subfigure[NMI in \ci]{\includegraphics[width=1.6in]{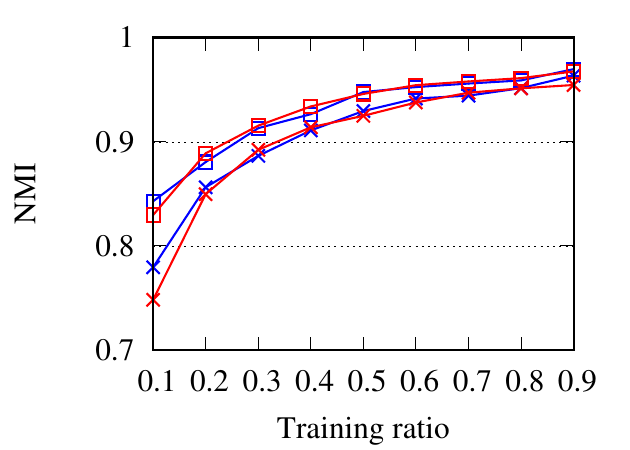}\label{fig:embeddingCSACC}}
\subfigure[Macro-F1 in \mitdata]{\includegraphics[width=1.6in]{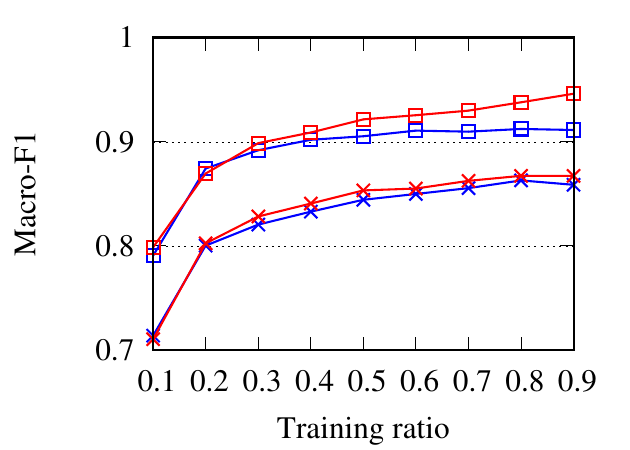}\label{fig:embeddingRMMA}}
\subfigure[Macro-F1 in \ci]{\includegraphics[width=1.6in]{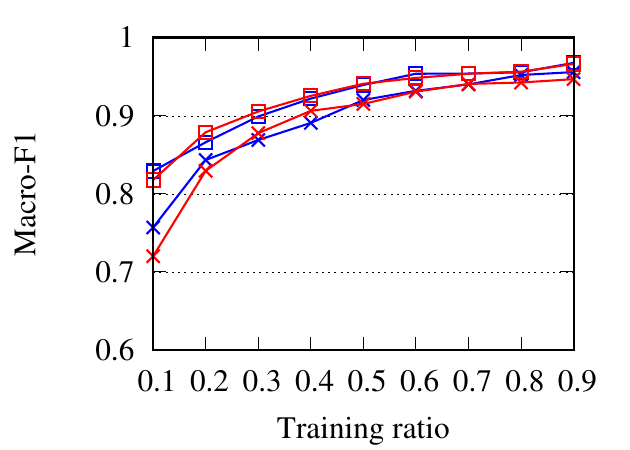}\label{fig:embeddingCSMA}}
\subfigure[Micro-F1  in \mitdata]{\includegraphics[width=1.6in]{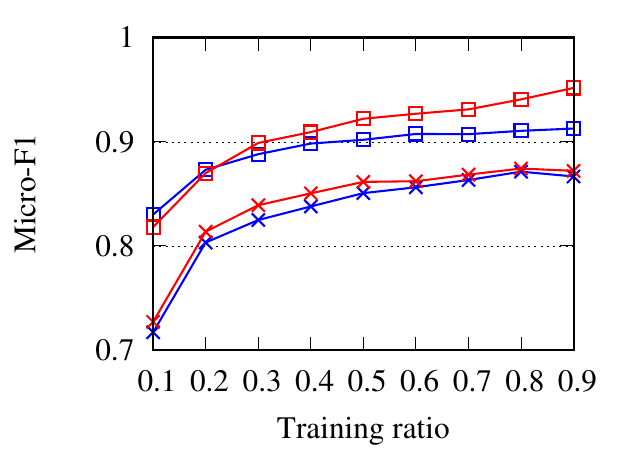}\label{fig:embeddingRMMI}}
\subfigure[Micro-F1 in  \ci]{\includegraphics[width=1.6in]{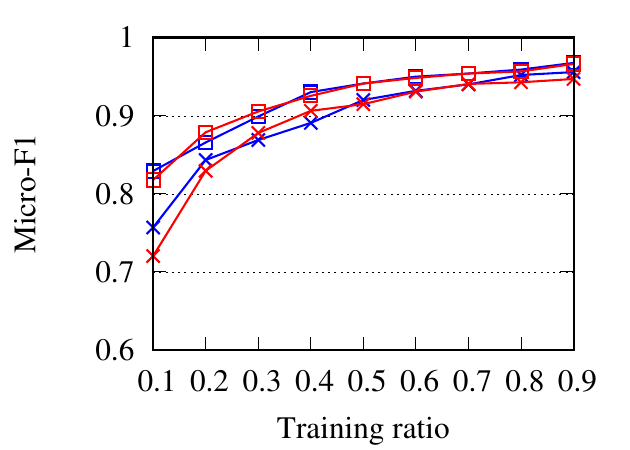}\label{fig:embeddingCSMI}}
\subfigure{\centering \includegraphics[width=3.5in]{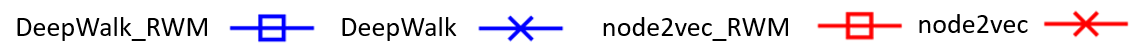}}
\caption{Accuracy performances of network embedding}
\label{fig:embedding}
\end{figure}

\subsection{Link Prediction}
We chose two multiplex networks, \gene and \athle to test our approach.

\subsubsection{Experimental setup}
Here we apply \rwm based random walk with restart (RWR) on multiplex networks to calculate the proximity scores between each pair of nodes on the target network, denoted by RWM. Then we choose the top 100 pairs of unconnected nodes with the highest proximity scores as the predicted links. We compare this method with traditional RWR on the target networks, denoted by Single, and RWR on the merged networks, which is obtained by summing up the weights of the same edges in all networks, denoted by Merged.  We choose the probability of restart $\alpha=0.9$ for all three methods and decay factor $\lambda = 0.4$ for \rwm. 

We focus on one network, denoted by the target network, at one time. For the target network, we randomly remove 30\% edges as the probe edges, which are used as the ground truth for testing. Other networks are unchanged~\cite{lu2009similarity}. We use ${precision}_{100}$ as the evaluation measurement. The final results are averaged over all networks in the dataset.

\subsubsection{Experimental Results}
As shown in Fig.~\ref{fig:lp}, for \athle, our method consistently performs better than the other two methods. Specifically, \rwm based RWR outperforms the traditional RWR and RWR on the merged network by (93.37\%, 218\%) in $G_1$, (351.69\%, 61.3\%) in $G_2$, (422.31\%, 8.17\%) in $G_3$, in average, respectively. For \gene, \rwm based RWR achieves the best results in 6 out of 7 networks. From the experimental results, we can draw a conclusion that using auxiliary link information from other networks can improve link prediction accuracy. Single outperforms Merged in 1 out of 3 and 4 out of 7 networks in these two multiplex networks, respectively, which shows that including other networks does not always lead to better results. Thus, \rwm is an effective way to actively select useful information from other networks.

\begin{figure}[ht]
\subfigure[\athle]{\includegraphics[width=1.65in]{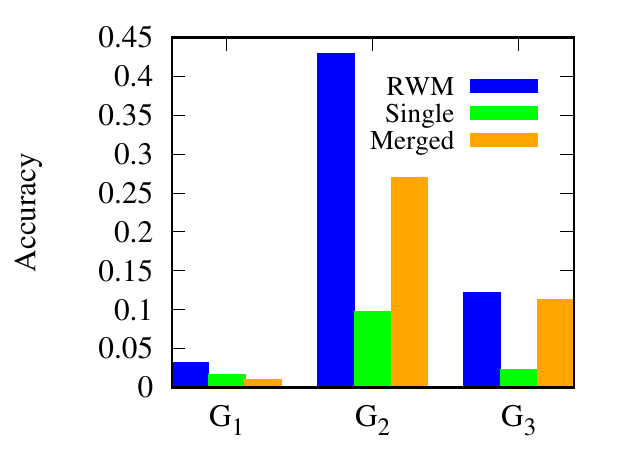}\label{fig:lpsocial}}
\subfigure[\gene]{\includegraphics[width=1.65in]{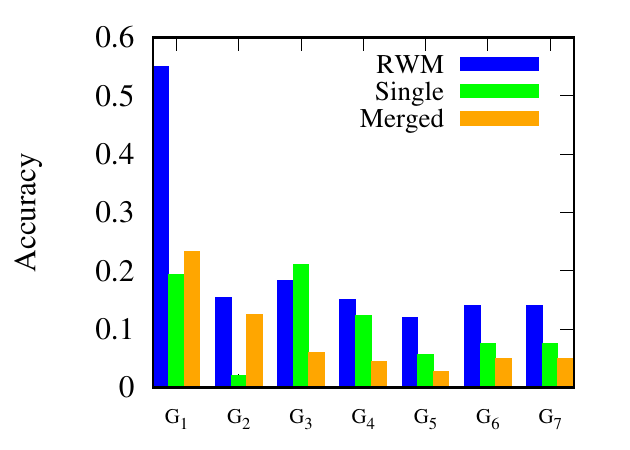}\label{fig:lpgenetic}}
\caption{Accuracy Performances of Link Prediction}
\label{fig:lp}
\end{figure}

\subsection{Efficiency Evaluation}
Note that the running time of \rwm using basic power-iteration (Algorithm  \ref{alg:two_multiplex}) is similar to the iteration-based random walk local community detection methods but \rwm obtains better performance than other baselines with a large margin (Table \ref{tab:ac}). Thus, in this section, we focus on \rwm and use synthetic datasets to evaluate its efficiency. There are three methods to update visiting probabilities in \rwm: (1) power iteration method in \textcolor{black}{Algorithm  \ref{alg:two_multiplex}} (PowerIte), (2) power iteration with early stopping introduced in Sec. \ref{A1} (A1), and (3) partial updating in Sec. \ref{A2} (A2). 

\begin{figure}[h]
\centering
\subfigure[{{Total time \wrt \# of networks}}]{\includegraphics[width=1.6in]{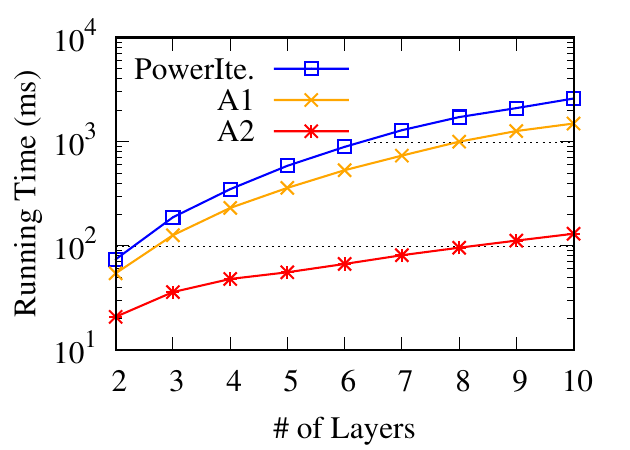}\label{fig:time_network_total}}\hspace{-0.4em}
\subfigure[Total time \wrt \# of nodes]{\includegraphics[width=1.6in]{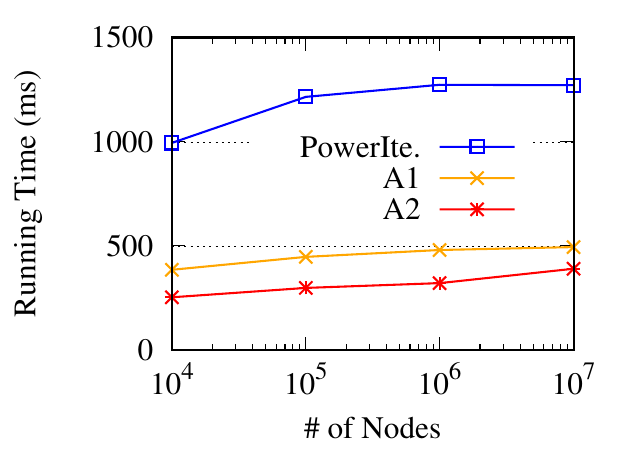}\label{fig:time_node_total}}\\
\caption{Efficiency evaluation with synthetic datasets}
% \vspace{-1em}
\label{fig:efficiency}
\end{figure}

We first evaluate the running time \wrt the number of networks. We generate 9 datasets with different numbers of networks (2 to 10). For each dataset,  we first use a graph generator~\cite{lancichinetti2009benchmarks} to generate a base network consisting of 1,000 nodes and about 7,000 edges. Then we obtain multiple networks from the base network. In each network, we randomly delete 50\% edges from the base network. For each dataset, we randomly select a node as the query and detect local communities. In Fig.~\ref{fig:time_network_total}, we report the running time of the three methods averaged over 100 trials.
The early stopping in A1 saves time by about 2 times for the iteration method. The partial updating in A2 can further speed up by about 20 times. Furthermore, we can observe that the running time of A2 grows slower than PowerIte. Thus the efficiency gain tends to be larger with more networks.

Next, we evaluate the running time \wrt the number of nodes. Similar to the aforementioned generating process, we synthesize 4 datasets from 4 base networks with the number of nodes ranging from $10^4$ to $10^7$. The average node degree is 7 in each base network. For each dataset, we generate multiplex networks with three networks, by randomly removing half of the edges from the base network. The running time is shown in Fig.~\ref{fig:time_node_total}. Compared to the power iteration method, the approximation methods are much faster, especially when the number of nodes is large.  In addition, they grow much slower, which enables their scalability on even larger datasets.

We further check the number of visited nodes, which have positive visiting probability scores in \rwm, in different-sized networks. In Table \ref{tab:visit}, we show the number of visited nodes both at the splitting time $T_e$ and the end of updating (averaged over 100 trials). Note that we compute $T_e=\lceil\log_\lambda \frac{\epsilon(1-\lambda)}{K}\rceil$ according to Theorem \ref{th:error_general} with $K=3, \epsilon=0.01, \lambda=0.7$. We see that in the end, only a very small proportion of nodes are visited (for the biggest $10^7$ network, probability only propagates to 0.02\% nodes). This demonstrates the locality of \rwm. Besides, in the first phase (early stop at $T_e$), visiting probabilities are restricted to about 50 nodes around the query node.

\begin{table}[t!]
\begin{small}
    \centering
        \caption{Number of visited nodes v.s. different network size}
        \label{tab:visit}
    \begin{tabular}{c|c|c|c|c}
    \hline
         & $10^4$ &$10^5$ & $10^6$ & $10^7$ \\
        \hline
        \# visited nodes ($T_e$) & 47.07	& 50.58	& 53.27	& 53.54\\
        \# visited nodes (end)  &1977.64&2103.43&2125.54& 2177.70\\\hline
    \end{tabular}
\end{small}
\end{table}
\section{Conclusion}
In this paper, we propose a novel random walk model, \rwm, on multiple networks. Random walkers on different networks sent by \rwm mutually affect their visiting probabilities in a reinforced manner. By aggregating their effects from local subgraphs in different networks, \rwm restricts walkers' most visiting probabilities among the most relevant nodes. Rigorous theoretical foundations are provided to verify the effectiveness of \rwm. Two speeding-up strategies are also developed for efficient computation. Extensive experimental results verify the advantages of \rwm in effectiveness and efficiency.

\section*{ACKNOWLEDGMENTS}
This project was partially supported by NSF project IIS-1707548.

\bibliographystyle{abbrv}
\begin{small}
\bibliography{references}
\end{small}

\end{document}